\newtheorem{theorem}{Theorem}
\newtheorem{corollary}{Corollary}[theorem]
\tikzset{>=latex}
\newcolumntype{C}{>{\centering\arraybackslash}X} 
\title{Performance Analysis of Dual-Hop Underwater Wireless Optical Communication
Systems over Mixture Exponential-Generalized Gamma Turbulence Channels}
\thanks{E. Zedini, A. Kammoun and M.-S. Alouini are with the Computer, Electrical, and Mathematical Science and Engineering (CEMSE) Division, King Abdullah University of Science and Technology (KAUST) Thuwal, Makkah Province, Saudi Arabia (e-mails:\{emna.zedini,abla.kammoun,slim.alouini\}@kaust.edu.sa).}
\thanks{H. Soury with the Department of Electrical and Computer Engineering
University of Illinois at Chicago, Chicago, IL, USA (e-mail:\{soury\}@uic.edu).}
\thanks{M. Hamdi is with the College of Science and Engineering, Hamad Bin
Khalifa University (HBKU), Doha, Qatar (e-mail:\{mhamdi\}@hbku.edu.qa).}
\begin{document}
\pgfkeys{/pgf/fpu}
\pgfmathparse{16383+1}
\edef\tmp{\pgfmathresult}
\pgfkeys{/pgf/fpu=false}

\newcommand{\boundellipse}[3]
{(#1) ellipse (#2 and #3)
}
\maketitle
\begin{abstract}
In this work, we present a unified framework for the performance analysis of dual-hop underwater wireless optical communication (UWOC) systems with amplify-and-forward fixed gain relays in the presence of air bubbles and temperature gradients. Operating under either heterodyne detection or intensity modulation with direct detection, the UWOC is modeled by the unified mixture Exponential-Generalized Gamma distribution that we have proposed based on an experiment conducted in an indoor laboratory setup and has been shown to
provide an excellent fit with the measured data under the considered lab channel scenarios.
More specifically, we derive the cumulative distribution function (CDF) and the probability density function of the end-to-end signal-to-noise ratio (SNR) in exact closed-form in terms of the bivariate Fox's H function. Based on this CDF expression, we present novel results for the fundamental performance metrics such as the outage probability, the average bit-error rate (BER) for various modulation schemes, and the ergodic capacity. Additionally, very tight asymptotic results for the outage probability and the average BER at high SNR are obtained in terms of simple functions. Furthermore, we demonstrate that the dual-hop UWOC system can effectively mitigate the short range and both temperature gradients and air bubbles induced turbulences, as compared to the single UWOC link.
All the results are verified via computer-based Monte-Carlo simulations.
\end{abstract}

\begin{IEEEkeywords}
Underwater wireless optical communication (UWOC), air bubbles, temperature gradient, dual-hop relaying, mixture models, Exponential-Generalized Gamma (EGG) distribution, outage probability, bit-error rate (BER), ergodic capacity.
\end{IEEEkeywords}

\section{Introduction}
Underwater wireless optical communication (UWOC) has gained a significant research attention as an appropriate and efficient transmission solution for a variety of underwater applications including offshore oil field exploration, oceanographic data collection, maritime archaeology, environmental monitoring, disaster prevention, and port security among others \cite{SurveyZeng}.
This rapidly growing interest stems from the recent advances in signal processing, digital communication, and low-cost visible light-emitting diodes (LEDs) and laser diodes (LD) that have the lowest attenuation in seawater \cite{LEDadvances, HassanHighDR, OubeiQAMOFDM, Oubei20m}. UWOC systems, operating in the blue/green portion of the spectrum in the 400-550 nm wavelength band, promise high data rates, low-latency, high transmission security, and reduced energy
consumption, compared with their acoustic counterparts \cite{SurveyZeng, Simodetection, HighBdW}. However, UWOC systems suffer from severe absorption and scattering introduced by the underwater channel \cite{SurveyZeng,Jaruwatanadilok,Gabriel2,KihongWCL,Abla19,Mahdi19} as well as underwater optical turbulence (UOT) that results from rapid changes in the refractive index of the water caused by temperature fluctuations, salinity variations as well as the presence of air bubbles in seawater \cite{UOT,Nikishov2000,Korotkova2012,Hagem,OubeiBubbles}. As a consequence, the received optical intensity
undergoes rapid fluctuations which may degrade the UWOC
system performance and affect its reliability.

Investigating the proper statistical distribution of optical signal fluctuations due to UOT is a fundamental challenge in UWOC.
Early studies on UOT have used free-space atmospheric turbulence models such as the Lognormal distribution to describe the irradiance fluctuations in the underwater environment \cite{Gercekcioglu,lognmunderwater}. In \cite{ExpLN}, the mixture Exponential-Lognormal model has been proposed to describe the irradiance fluctuations due to air bubbles in both fresh and salty waters in UWOC channels.
However, the mathematical form of Lognormal-based
distributions is not always convenient for analytic calculations.
Furthermore, the design and the performance analysis of such
systems is much more challenging. Indeed, the application
of the Exponential-Lognormal in UWOC channels makes it
very hard to obtain closed-form and easy-to-use expressions
for important performance metrics such as the outage probability
and the average bit-error rate (BER). The mathematical
intractability of the Lognormal-based model becomes more
evident when we know that the assessment of BER is based
on numerical methods, as closed-form analytical expressions
are not available for this model. In \cite{Oubei:17sg}, the Weibull distribution was used to characterize fluctuations of laser beam intensity in underwater caused by salinity gradient. The Generalized Gamma distribution (GGD) was proposed to accurately describe both non turbulent thermally uniform and gradient based underwater wireless optical channels in \cite{Oubei:17}.
In \cite{ConfUWOC}, the mixture Exponential-Gamma distribution was presented to characterize optical signal irradiance fluctuations in underwater channel in the presence of air bubbles for uniform temperature.
The
presence of the Lognormal distribution or the
Gamma distribution agrees with previous studies suggesting
its use to model underwater optical channels. The Exponential
distribution, is however, less common. As shown in \cite{ConfUWOC}, it is
 used to model the loss in the received energy caused by
air bubbles. Therefore, typical single-lobe distributions cannot
appropriately fit the measured data in the presence of air
bubbles, and a two-lobe statistical model is required to predict
the statistical behavior of UWOC turbulence-induced fading in
all regions of the scintillation index.
In \cite{JamaliModel}, different statistical distributions have been proposed to describe fading in UWOC channels under various conditions.

In \cite{UWOCtcom}, we have proposed a new statistical model to characterize turbulence-induced fading in UWOC channels in the presence of both air bubbles and temperature gradients for fresh and salty waters, based on an experiment conducted in an indoor laboratory setup. In fact, there are many sources of temperature gradient in the oceans. Influxes of glacial fresh water, extratropical cyclones, and ocean currents such as Labrador and Gulf Stream \cite{Siedler} are a few examples of temperature-induced turbulent UWOC channels.
In this model, the channel irradiance fluctuations are characterized by the mixture Exponential-Generalized Gamma (EGG) distribution, which is a weighted sum
of the Exponential and Generalized Gamma distributions \cite{UWOCtcom}. We used the expectation maximization
(EM) algorithm to obtain the maximum likelihood parameter
estimation of the new model.
We have demonstrated that this
model perfectly matches the measured data, collected under different lab channel scenarios ranging from weak to strong
turbulence conditions, for both salty as well as fresh waters.
A comparison with the Exponential-Lognormal model has been also
performed, and we have shown that the proposed distribution
can be applicable under various
conditions of irradiance fluctuations considered in the experimental setup in \cite{UWOCtcom}, providing a better fit to the measured data as well. Furthermore, this model being
simpler and analytically more tractable than the Exponential-Lognormal model, is more convenient for
performance analysis and design of UWOC systems.
Moreover, based on reference \cite{Oubei:17sg} where the Weibull distribution which is a special case of the Generalized Gamma distribution was used to fit irradiance fluctuations data due to underwater salinity gradient, the EGG model is expected to accurately capture a combination of air bubbles, gradient of temperature, and gradient of salinity fluctuations, making it a unified model that can address the statistics of optical beam irradiance fluctuations in a wide variety of turbulent underwater wireless optical channels. In addition, when the water temperature is uniform, the received intensity of
the laser beam is best described by the mixture Exponential-Gamma distribution which is a special case of the EGG model.

Dual-hop relaying, where an intermediate terminal relays the signal from the source terminal to
the destination terminal, can be used over UWOC links to mitigate scattering, absorption and turbulence-induced fading and extend
the viable communication range. This is due to the fact that these impairing
factors increase rapidly with distance \cite{JamaliRelay}. Therefore, dividing the long communication distance to shorter ones by means of intermediate relays
is an efficient technique to expand the coverage of underwater optical wireless sensor networks (UOWSNs) \cite{Celiksensor} with low power requirements, increase the reliability of the UWOC link, and
offer high data-rate at the end-to-end communication. In \cite{JamaliMultihop},
the average BER of point-to-point multi-hop UWOC systems is investigated. The BER of
relay-assisted underwater wireless optical code-division multiple
access (OCDMA) networks over turbulent channels have been addressed in \cite{JamaliRelay}. The authors in \cite{Celiksensor} investigated the connectivity of UOWSNs and its impacts on the network localization performance. The end-to-end performance of multi-hop underwater wireless optical networks using amplify-and-forward (AF) and the more complex decode-and-forward (DF)
relaying schemes has been investigated in \cite{Celik2018}. Indeed, DF systems use more complex relays that fully decode the source-relay signal
and retransmit the decoded version into the relay-destination hop. On the other hand, AF systems
just amplify the incoming signal and forward it to the second hop without performing any kind
of decoding and can be classified into two categories, namely, fixed-gain relays and channel state information (CSI)-assisted relays. AF systems using CSI-assisted relays need instantaneous CSI of the first hop to control the relay gain resulting in a signal with fixed power at the relay output. However, AF systems employing fixed gain relays do not require the knowledge of the instantaneous CSI of the first hop and
use amplifiers with fixed gains at the relays, and as a result the power of the retransmitted signal is variable. Although AF systems equipped with fixed gain relays provide lower performance as compared to systems using CSI-assisted relays, they have the advantage of being less complex and easy to deploy which make them attractive from a practical standpoint \cite{Hasna}.

However, to the best of authors' knowledge, this is the first closed-form performance analysis of dual-hop UWOC systems using AF fixed gain relaying over the newly proposed EGG fading model that includes several statical models as special cases \cite{UWOCtcom}, where each UWOC link operates under either the heterodyne or the intensity modulation with direct detection (IM/DD) in the presence of air bubbles and gradient of temperature fluctuations. We propose a novel mathematical framework to derive exact closed-form expressions for
the outage probability, the average BER for a variety of modulation schemes, and the ergodic capacity, while not making any assumptions in our derivations, in terms of the bivariate Fox's H function. Moreover, our performance study provides a generalized framework for several fading channels such as Generalized Gamma \cite{Oubei:17sg} and Weibull \cite{Oubei:17} distributions that we have proposed to characterize UWOC channel fading due to temperature-induced turbulence and salinity-induced turbulence, respectively.
Furthermore, we present new and very tight asymptotic expressions for the outage probability and the average BER
in the high SNR regime in terms of simple elementary functions. Capitalizing on these asymptotic results, we derive the diversity gain of the dual-hop UWOC system under study.

The remainder of this paper is organized as follows. In Section II, we present the channel and communication system model. We derive the cumulative distribution function (CDF), the probability density function (PDF), and the moments of the end-to-end signal-to-noise ratio (SNR) of dual-hop UWOC systems in Section III. Capitalizing on these results,
we present closed-form expressions of the outage probability, the average BER for a variety of modulation schemes, and the ergodic capacity along with the asymptotic analysis at high SNR regime in Section IV.
Section V presents some numerical and simulation results to illustrate the mathematical formalism presented in Sections III and IV. Finally, some concluding remarks are drawn in Section VI.

\section{System and Channel Models}
We consider a dual-hop UWOC system, where the source node S and the destination node
D are communicating through the help of an intermediate relay node R which relays the information signal from S to D, acting as a non-regenerative fixed gain relay.
We assume that there is no direct link between nodes S and D due to the unsatisfactory
quality of the channel between them, and the communication can be achieved only through the relay.
Moreover, the optical beam propagating through the UWOC channel is significantly affected by the scattering and absorption effects in addition to the underwater optical turbulence caused by air bubbles and gradient of temperature. More specifically, scattering and absorption effects attenuate the mean irradiance
of the light beam and result in path loss of the UWOC channel. On the other hand, turbulence results in fluctuations (scintillations) of the received signal and may lead to link outage which ultimately degrades the performance of UWOC channels \cite{advancedfso,Simodetection}. Under this combined effect of absorption and scattering as well as optical turbulence, the normalized channel fading is appropriately characterized by the mixture EGG model, based on experimental measured data \cite{UWOCtcom}.
Therefore, the two UWOC hops (i.e. S-R and R-D) are subject to independent but not necessarily identically distributed mixture EGG distribution \cite{UWOCtcom}
\begin{align}
f_{I_i}(I_i)= \frac{\omega_i}{\lambda_i}\exp\left(-\frac{I_i}{\lambda_i}\right)+(1-\omega_i)\frac{c_i\,I_i^{a_i c_i-1}}{b_i^{a_i c_i}\Gamma(a_i)}\exp\left(-\left(\frac{I_i}{b_i}\right)^{c_i}\right),\quad i=1,2,
\label{newmodel}
\end{align}
where $\omega_i$ is the mixture weight or mixture coefficient of the distributions satisfying $0<\omega_i<1$, $\lambda_i$ is the parameter associated with the Exponential distribution, $a_i,b_i$ and $c_i$ are the parameters of the Generalized Gamma distribution, and $\Gamma(.)$ denotes the Gamma function for $i=1, 2$.

The scintillation index for each UWOC link $\sigma_{I_i}^2$, defined as the normalized variance of the irradiance fluctuations is given by \cite[Eq.(6)]{UWOCtcom}
\begin{align}\label{SInewmodel}
\sigma_{I_i}^2=2\omega_i\lambda_i^2+(1-\omega_i)b_i^2\frac{\Gamma(a_i+\frac{2}{c_i})}{\Gamma(a_i)}-1.
\end{align}

Considering both types of detection techniques (IM/DD as well as heterodyne detection), the PDF of the
instantaneous SNR at the $i$-th hop $\gamma_i$, defined as $\gamma_i=(\eta I_i)^{r_i}/N_{0_{i}}$, can be derived from (\ref{newmodel}) as \cite[Eq.(21)]{UWOCtcom}
\begin{align}\label{SNRPDFHop}
f_{\gamma_i}(\gamma_i)=\frac{\omega_i}{r_i\,\gamma_i} \,{\rm{G}}_{0,1}^{1,0}\left[ \frac{1}{\lambda_i}\left ( \frac{\gamma_i}{\mu_{r_i}} \right )^{\frac{1}{r_i}}\left| \begin{matrix} {-} \\ {1} \\ \end{matrix} \right. \right]
+\frac{c_i(1-\omega_i)}{r_i \Gamma(a_i)\gamma_i}{\rm{G}}_{0,1}^{1,0}\left[\frac{1}{b_i^{c_i}} \left ( \frac{\gamma_i}{\mu_{r_i}} \right )^{\frac{c_i}{r_i}}\left| \begin{matrix} {-} \\ {a_i} \\ \end{matrix} \right. \right],
\end{align}
where $\eta$ stands for the effective photoelectric conversion ratio, $N_{0_i}$ is the power of the
additive white Gaussian noise (AWGN) at the $i$-th hop, $r_i$ represents the type of
detection being employed at each hop (i.e. $r_i=1$ is associated with heterodyne
detection and $r_i=2$ is associated with IM/DD), ${\rm{G}}_{\cdot,\cdot}^{\cdot,\cdot}(\cdot)$ is the Meijer's G function \cite[Eq.(9.301)]{Tableofintegrals}, and $\mu_{r_i}$ denotes to the average electrical SNR of the $i$-th hop for $i=1, 2$.
In particular, for $r_i=1$,
\begin{align}
\mu_{1_i}=\mu_{{\rm{heterodyne}}_i}=\mathbb{E}[\gamma_i]=\overline{\gamma}_i,
\end{align}
and for $r_i=2$,
\begin{align}
\mu_{2_i}=\mu_{{\rm{IM/DD}}_i}=
\frac{\bar{\gamma_i}}{2 \omega_i \lambda_i^2+b_i^2(1-\omega_i)\Gamma\left ( a_i+2/c_i \right )/\Gamma(a_i)}.
\end{align}\noindent
Moreover, the CDF of $\gamma_i$ can be expressed as \cite[Eq.(22)]{UWOCtcom}
\begin{align}\label{SNCDFHop}
F_{\gamma_i}(\gamma_i)&=\omega_i \,{\rm{G}}_{1,2}^{1,1}\left[ \frac{1}{\lambda_i}\left ( \frac{\gamma_i}{\mu_{r_i}} \right )^{\frac{1}{r_i}}\left| \begin{matrix} {1} \\ {1,0} \\ \end{matrix} \right. \right]
+\frac{(1-\omega_i)}{\Gamma(a_i)}{\rm{G}}_{1,2}^{1,1}\left[ \frac{1}{b_i^{c_i}}\left ( \frac{\gamma_i}{\mu_{r_i}} \right )^{\frac{c_i}{r_i}}\left| \begin{matrix} {1} \\ {a_i,0} \\ \end{matrix} \right. \right].
\end{align}
It is worthy to mention that in the special case of thermally uniform UWOC channels, the EGG model simplifies to the Exponential-Gamma model and therefore, the CDF expression can be simplified by setting $c_i=1$ in (\ref{SNCDFHop}).

The end-to-end instantaneous SNR of dual-hop UWOC systems
with fixed gain relays that introduce a fixed gain to the received signal regardless of the fading amplitude on the first hop and consequently result in a signal with variable power at the output of the relay can be written under the assumption of negligible saturation as \cite{Hasna,Yacoub2008,Sonia2009,Peppas2010,Ines2010}
\begin{align}\label{overallSNR}
\gamma=\frac{\gamma_1 \gamma_2}{\gamma_2+C},
\end{align}
where $C$ is a constant inversely proportional to the squared relay's gain such that $C=1/(G^2 N_{0_1})$, $G$ represents the relay gain established in the connection, $N_{0_1}$ stands for the power of the
additive white Gaussian noise (AWGN) at the first hop, and $\gamma_i$ represents the instantaneous SNR for the $i$-th
hop for $i=1, 2$, with the PDF and CDF given by (\ref{SNRPDFHop}) and (\ref{SNCDFHop}), respectively.
More specifically, we consider a semi-blind fixed gain relaying system that benefits from the knowledge
of the first hop's average fading power where the fixed
gain is set equal to the average of the CSI-assisted gain \cite{Hasna,Yacoub2008,Sonia2009,Peppas2010,Ines2010}, that is,
\begin{align}
G^2=\mathbb{E}\left [ \frac{1}{ (\eta I)^r +N_{0_1}} \right ],
\end{align}
which can be obtained in closed-form by using \cite[Eq.~(8.4.2/5)]{PrudinkovVol3} then \cite[Eq.(2.9.1)]{HTranforms}, and applying \cite[Eq.~(2.25.1/1)]{PrudinkovVol3} as
\begin{align}\label{Gsq}
G^2=\frac{1}{N_{0_1}}\left (\omega_1{\rm{H}}_{1,2}^{2,1}\left[\frac{1 }{\lambda_1^{r_1}\mu_{r_{1 }}}\left| \begin{matrix} {(1,1)} \\ {(1,r_1)(1,1)} \\ \end{matrix} \right. \right]+\frac{(1-\omega_1)}{\Gamma(a_1)}{\rm{H}}_{1,2}^{2,1}\left[\frac{1 }{b_1^{r_1}\mu_{r_{1 }}}\left| \begin{matrix} {(1,1)} \\ {\left(a_1,\frac{r_1}{c_1}\right)(1,1)} \\ \end{matrix} \right. \right]  \right ).
\end{align}
Finally, the parameter $C$ can be easily derived from (\ref{Gsq}) as $C=1/(G^2 N_{0_1})$.
It is important to mention that semi-blind relays are more attractive than CSI-assisted relays from a practical standpoint as they offer simplicity and ease of deployment. This is due to the fact that such systems do not require a continuous monitoring of the channel for its instantaneous knowledge, as compared to the CSI-assisted relaying case \cite{Hasna}.

\section{End-to-End SNR Statistics}
This section derives new closed-form expressions for the end-to-end SNR statistics of the dual-hop UWOC fixed gain relaying system that accounts for air bubbles and temperature gradients for fresh and salty waters, under both heterodyne detection and IM/DD techniques. A tractable and very tight asymptotic approximation for the CDF of the end-to-end SNR is also provided and in sequel the diversity order of the system is presented.

\begin{theorem}(Cumulative Distribution Function). The CDF of the end-to-end SNR $\gamma$ defined in (\ref{overallSNR})
can be obtained in exact closed-form by
\begin{align}\label{SNRCDF}
\nonumber F_\gamma(\gamma)&=1-\omega_1 \omega_2{\rm{H}}_{1,0:1,1:0,2}^{0,1:0,1:2,0}\begin{bmatrix}
\begin{matrix}
(1;1,1)\\--
\end{matrix}
\Bigg|\begin{matrix}
(0,r_1)\\(0,1)
\end{matrix}
\Bigg|\begin{matrix}
--\\(0,1)(1,r_2)
\end{matrix}
\Bigg|
\frac{\lambda_1^{r_1} \mu_{r_1}}{\gamma},\frac{C}{\lambda_2^{r_2} \mu_{r_2}}
\end{bmatrix}\\
\nonumber & -\frac{\omega_1 \left (1-\omega_2  \right )}{\Gamma(a_2)}
{\rm{H}}_{1,0:1,1:0,2}^{0,1:0,1:2,0}\begin{bmatrix}
\begin{matrix}
(1;1,1)\\--
\end{matrix}
\Bigg|\begin{matrix}
(0,r_1)\\(0,1)
\end{matrix}
\Bigg|\begin{matrix}
--\\(0,1)\left(a_2,\frac{r_2}{c_2}\right)
\end{matrix}
\Bigg|
\frac{\lambda_1^{r_1} \mu_{r_1}}{\gamma},\frac{C}{b_2^{r_2} \mu_{r_2}}
\end{bmatrix}\\
\nonumber & - \frac{\omega_2 \left (1-\omega_1  \right )}{\Gamma(a_1)}
{\rm{H}}_{1,0:1,1:0,2}^{0,1:0,1:2,0}\begin{bmatrix}
\begin{matrix}
(1;1,1)\\--
\end{matrix}
\Bigg|\begin{matrix}
\left (1-a_1,\frac{r_1}{c_1}  \right )\\(0,1)
\end{matrix}
\Bigg|\begin{matrix}
--\\(0,1)(1,r_2)
\end{matrix}
\Bigg|
\frac{b_1^{r_1} \mu_{r_1}}{\gamma},\frac{C}{\lambda_2^{r_2} \mu_{r_2}}
\end{bmatrix}\\
& - \frac{\left (1-\omega_1  \right )\left ( 1-\omega_2  \right )}{\Gamma(a_1)\Gamma(a_2)}{\rm{H}}_{1,0:1,1:0,2}^{0,1:0,1:2,0}\begin{bmatrix}
\begin{matrix}
(1;1,1)\\--
\end{matrix}
\Bigg|\begin{matrix}
\left (1-a_1,\frac{r_1}{c_1}  \right )\\(0,1)
\end{matrix}
\Bigg|\begin{matrix}
--\\(0,1)\left (a_2,\frac{r_2}{c_2}  \right )
\end{matrix}
\Bigg|
\frac{b_1^{r_1} \mu_{r_1}}{\gamma},\frac{C}{b_2^{r_2} \mu_{r_2}}
\end{bmatrix},
\end{align}
where ${\rm{H}}_{\cdot,\cdot:\cdot,\cdot:\cdot,\cdot}^{\cdot,\cdot:\cdot,\cdot:\cdot,\cdot}[\cdot]$ stands for the Fox's H function of two variables \cite{HFoxIntegrals}, known also as the bivariate Fox's H function, with a MATLAB
implementation presented in \cite{arxivFox,MutivariateHcode1,MutivariateHcode2}.
\end{theorem}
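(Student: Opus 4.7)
The plan is to apply the standard conditional-CDF decomposition for fixed-gain amplify-and-forward relaying. Conditioning on $\gamma_2=y>0$, the event $\{\gamma_1\gamma_2/(\gamma_2+C)\le \gamma\}$ is equivalent to $\{\gamma_1 \le \gamma(1+C/y)\}$, so
\begin{align*}
F_\gamma(\gamma) = \int_0^\infty F_{\gamma_1}\!\left(\gamma\bigl(1 + \tfrac{C}{y}\bigr)\right) f_{\gamma_2}(y)\,dy = 1 - \int_0^\infty \bar{F}_{\gamma_1}\!\left(\gamma\bigl(1 + \tfrac{C}{y}\bigr)\right) f_{\gamma_2}(y)\,dy,
\end{align*}
where $\bar{F}_{\gamma_1}=1-F_{\gamma_1}$; the second form already exhibits the ``$1-(\cdots)$'' structure of (\ref{SNRCDF}). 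I would then exploit the mixture structure of the EGG model to split the integrand: from (\ref{SNCDFHop}) the complementary CDF $\bar{F}_{\gamma_1}$ is a convex combination (with weights $\omega_1$ and $1-\omega_1$) of an Exponential and a Generalized Gamma CCDF, and from (\ref{SNRPDFHop}) $f_{\gamma_2}$ splits similarly with weights $\omega_2$ and $1-\omega_2$. Expanding the product yields four cross-terms indexed by $\{\mathrm{Exp},\mathrm{GG}\}\times\{\mathrm{Exp},\mathrm{GG}\}$, in one-to-one correspondence with the four subtracted terms in (\ref{SNRCDF}) and their prefactors $\omega_1\omega_2$, $\omega_1(1-\omega_2)/\Gamma(a_2)$, $\omega_2(1-\omega_1)/\Gamma(a_1)$, and $(1-\omega_1)(1-\omega_2)/[\Gamma(a_1)\Gamma(a_2)]$.

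The core of the argument is the evaluation of each cross-integral, and the hard part is the factor $(1+C/y)^{1/r_1}$ (respectively $c_1/r_1$) that sits inside the argument of the Meijer G representing $\bar{F}_{\gamma_1}$; this shift forbids a direct Mellin-convolution between the two Meijer G's. My strategy is threefold. First, rewrite the outer Meijer G as its defining Mellin-Barnes contour integral in a variable $s$, which isolates an inner factor of the form $(1+C/y)^{-s/r_1}$. Second, resolve this binomial via the classical identity
\begin{align*}
(1+z)^{-\alpha} = \frac{1}{2\pi i\,\Gamma(\alpha)}\int_\mathcal{L} \Gamma(\alpha+t)\,\Gamma(-t)\,z^{t}\,dt,
\end{align*}
with $z=C/y$ and $\alpha=s/r_1$, which introduces a second contour in a variable $t$ and, crucially, produces the coupling factor $\Gamma(s/r_1+t)/\Gamma(s/r_1)$ tying the two contours together; this is precisely the origin of the top entry $(1;1,1)$ that appears in each bivariate Fox's H in (\ref{SNRCDF}). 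Third, interchange the two contours with the $y$-integral (valid in appropriate strips of absolute convergence) and evaluate the innermost integral $\int_0^\infty y^{t-1}\,G_{0,1}^{1,0}[\cdot]\,dy$ as a Mellin transform of the corresponding piece of $f_{\gamma_2}$ using \cite[Eq.~(2.25.1/1)]{PrudinkovVol3}. What remains is a double contour integral in $s,t$ whose integrand is a product and ratio of Gamma functions — which, by definition \cite{HFoxIntegrals}, is a bivariate Fox's H function.

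The remaining work is bookkeeping: matching the resulting Gamma-function product to the standard parameter tuples of ${\rm H}^{0,1:0,1:2,0}_{1,0:1,1:0,2}$. Whether the first hop contributes its Exponential or Generalized Gamma part determines the first argument to be $\lambda_1^{r_1}\mu_{r_1}/\gamma$ versus $b_1^{r_1}\mu_{r_1}/\gamma$, and the corresponding numerator tuple $(0,r_1)$ versus $(1-a_1,\,r_1/c_1)$; similarly the second hop's Exp/GG choice selects $C/(\lambda_2^{r_2}\mu_{r_2})$ versus $C/(b_2^{r_2}\mu_{r_2})$ together with the denominator tuple $(1,r_2)$ versus $(a_2,\,r_2/c_2)$. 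Summing the four bivariate Fox's H functions thus obtained produces exactly (\ref{SNRCDF}).
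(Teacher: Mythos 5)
Your proposal is correct and follows essentially the same route as the paper's Appendix~A: a conditional decomposition of the fixed-gain end-to-end SNR, expansion into the four Exp/GG mixture cross-terms with exactly the stated prefactors, a double Mellin--Barnes representation whose coupling factor $\Gamma(s+t)$ yields the top tuple $(1;1,1)$, and identification with the bivariate Fox's H function via \cite[Eq.~(1.1)]{HFoxIntegrals}. The only (mathematically equivalent) differences are that the paper conditions on $\gamma_1$, integrating $f_{\gamma_1}(x)$ against $\overline{F}_{\gamma_2}\left(C\gamma/(x-\gamma)\right)$ over $x\in(\gamma,\infty)$ and generating $\Gamma(s+t)$ from the real Beta-type integral \cite[Eq.~(3.194/3)]{Tableofintegrals}, whereas you condition on $\gamma_2$ and generate the same coupling from the Mellin--Barnes expansion of $(1+C/y)^{-\alpha}$; after the rescalings $s\to r_1 s$, $t\to r_2 t$ (or $c_ir_i^{-1}$ variants) both reduce to the identical double contour integrand, e.g.\ $\Gamma(1+r_1s)\Gamma(-t)\Gamma(1-r_2t)\Gamma(s+t)/\Gamma(1+s)$ for the Exp--Exp term.
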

\begin{proof}
See Appendix A.
\end{proof}
It is worth noting that this closed-form result for the CDF is important and particularly useful to evaluate the outage probability performance of the dual-hop UWOC system as will be shown in the next section of this work. In addition, to obtain more engineering insights on the performance of the dual-hop UWOC system under study, we elaborate further
on the asymptotic analysis at high SNR regime in the following corollary.
\begin{corollary}
Assume that $\mu_{r_{1}},\mu_{r_{2}} \to \infty$. Then, the CDF in (\ref{SNRCDF}) can be expressed asymptotically in the high SNR regime, in terms of simple elementary functions as

\begin{align}\label{CDFHighSNR}
\nonumber  F_{\gamma}(\gamma)&\underset{\mu_{r_{1}},\mu_{r_{2}}\to \infty}{\mathop{\approx }}
\omega_1 \left (\frac{\gamma}{\lambda_1^{r_1}\mu_{r_{1 }}}  \right )^{\frac{1}{r_1}}
+\frac{(1-\omega_1)}{\Gamma(a_1+1)}\left ( \frac{\gamma}{b_1^{r_1}\mu_{r_{1}}} \right )^{\frac{a_1 c_1}{r_1}}+\frac{\omega_2(1-\omega_1)}{\Gamma(a_1)}\Gamma\left ( a_1-\frac{r_1}{c_1 r_2} \right )\\
\nonumber & \times \left ( \frac{C \gamma}{b_1^{r_1}\lambda_2^{r_2}\mu_{r_{1}}\mu_{r_{2}}} \right )^{\frac{1}{r_2}}+\frac{(1-\omega_2)}{\Gamma(a_2+1)}\left ( \frac{C \gamma}{b_2^{r_2}\mu_{r_{1 }}\mu_{r_{2 }}} \right )^{\frac{a_2 c_2}{r_2}}
\left (\frac{\omega_1}{\lambda_1^{\frac{r_1 a_2 c_2}{r_2}}}\Gamma\left ( 1-\frac{r_1 a_2 c_2}{r_2} \right )\right.\\
& \left. +\frac{(1-\omega_1)}{\Gamma(a_1)b_1^{\frac{r_1 a_2 c_2}{r_2}}}\Gamma\left ( a_1-\frac{r_1 a_2 c_2}{c_1 r_2} \right ) \right ).
\end{align}
\end{corollary}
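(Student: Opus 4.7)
The plan is to bypass the bivariate Fox's H representation in (\ref{SNRCDF}) and work directly from an alternative integral form of the end-to-end CDF. Conditioning on $\gamma_1$, the inequality $\gamma_1\gamma_2/(\gamma_2+C)\le\gamma$ in (\ref{overallSNR}) is equivalent to either $\gamma_1\le\gamma$, or $\gamma_1>\gamma$ together with $\gamma_2\le C\gamma/(\gamma_1-\gamma)$, so that
\begin{align*}
F_\gamma(\gamma) = F_{\gamma_1}(\gamma) + \int_\gamma^{\infty} F_{\gamma_2}\!\left(\frac{C\gamma}{x-\gamma}\right) f_{\gamma_1}(x)\,dx.
\end{align*}
This decomposition splits the two hops cleanly and reduces the high-SNR analysis to small-argument expansions of the marginal CDFs.

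First I would derive the leading-order behaviour of $F_{\gamma_i}(x)$ for small $x/\mu_{r_i}$ by reading off the residues of the $G_{1,2}^{1,1}$ functions in (\ref{SNCDFHop}) at the two leftmost poles, which yields
\begin{align*}
F_{\gamma_i}(x) \approx \omega_i\left(\frac{x}{\lambda_i^{r_i}\mu_{r_i}}\right)^{1/r_i} + \frac{1-\omega_i}{\Gamma(a_i+1)}\left(\frac{x}{b_i^{r_i}\mu_{r_i}}\right)^{a_ic_i/r_i}.
\end{align*}
Setting $i=1$, $x=\gamma$ directly produces the first two summands of (\ref{CDFHighSNR}).

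Next I would insert the $i=2$ version of this expansion, with argument $C\gamma/(x-\gamma)$, into the remaining integral. Because $\mu_{r_1}\to\infty$ pushes the bulk of $f_{\gamma_1}$ onto scales much larger than $\gamma$, one can replace $(x-\gamma)^{-\alpha}$ by $x^{-\alpha}$ at leading order, and the two integrals collapse to the fractional moments $\mathbb{E}[\gamma_1^{-1/r_2}]$ and $\mathbb{E}[\gamma_1^{-a_2c_2/r_2}]$. Plugging (\ref{SNRPDFHop}) into $\int_0^{\infty}x^{-\alpha}f_{\gamma_1}(x)\,dx$ and applying the standard identity $\int_0^{\infty}y^{\beta-1}\exp(-(y/\theta)^{\kappa})\,dy=\theta^{\beta}\Gamma(\beta/\kappa)/\kappa$ to each of the Exponential and Generalized Gamma components produces
\begin{align*}
\mathbb{E}[\gamma_1^{-\alpha}] = \mu_{r_1}^{-\alpha}\left[\omega_1\lambda_1^{-r_1\alpha}\Gamma(1-r_1\alpha)+\frac{1-\omega_1}{\Gamma(a_1)}b_1^{-r_1\alpha}\Gamma\!\left(a_1-\frac{r_1\alpha}{c_1}\right)\right].
\end{align*}
Multiplying by the prefactors $\omega_2\bigl(C\gamma/(\lambda_2^{r_2}\mu_{r_2})\bigr)^{1/r_2}$ and $\frac{1-\omega_2}{\Gamma(a_2+1)}\bigl(C\gamma/(b_2^{r_2}\mu_{r_2})\bigr)^{a_2c_2/r_2}$ then reproduces the third and fourth summands of (\ref{CDFHighSNR}).

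The main obstacle will be controlling the region near $x=\gamma$, where the small-argument expansion of $F_{\gamma_2}$ breaks down (its argument $C\gamma/(x-\gamma)$ blows up) and where the replacement $(x-\gamma)^{-\alpha}\to x^{-\alpha}$ is quantitatively wrong. I would need to verify that the contribution from $x\in[\gamma,\gamma+\delta)$ is sub-leading in $\mu_{r_1}^{-1}$, which follows because $f_{\gamma_1}$ is concentrated on scales of order $\mu_{r_1}$, so only the far tail governs the displayed orders. A secondary subtlety is that the $\omega_1\omega_2$ combination at order $\alpha=1/r_2$ carries a factor $\Gamma(1-r_1/r_2)$ which is singular in borderline regimes such as $r_1=r_2$ and in any case is already captured by, or dominated by, the first-hop-only term $\omega_1(\gamma/(\lambda_1^{r_1}\mu_{r_1}))^{1/r_1}$; this explains why only the $(1-\omega_1)\omega_2$ piece is retained in the third line of (\ref{CDFHighSNR}).
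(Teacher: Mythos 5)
Your proposal is correct but follows a genuinely different route from the paper. The paper never returns to the probabilistic decomposition: its Appendix B starts from the Mellin--Barnes representations of the four terms $\mathcal{I}_1,\dots,\mathcal{I}_4$ obtained in Appendix A, truncates the Taylor expansion of the inner $ {\rm H}_{1,2}^{2,1}$ factor (keeping only the residue at $t=0$, plus the next $t$-pole for the terms where it is retained), and then invokes the large-argument expansion \cite[Eq.~(1.5.9)]{HTranforms} of the resulting univariate H function. You instead work directly from $F_\gamma(\gamma)=F_{\gamma_1}(\gamma)+\int_\gamma^\infty F_{\gamma_2}(C\gamma/(x-\gamma))f_{\gamma_1}(x)\,dx$ (a trivial rearrangement of the paper's Eq.~(\ref{CDFUWOCP2})), use the small-argument residues of the marginal CDFs, and identify the cross-term coefficients as negative fractional moments $\mathbb{E}[\gamma_1^{-1/r_2}]$ and $\mathbb{E}[\gamma_1^{-a_2c_2/r_2}]$. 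This is more elementary (no bivariate H-function asymptotics needed), gives the coefficients a clear probabilistic meaning, and your moment formula checks out against the stated result; the price is that you must separately control the neighbourhood of $x=\gamma$, which you correctly flag and which is handled by a standard choice of intermediate cutoff $\delta$ with $\mu_{r_2}^{-1}\ll\delta\ll 1$.

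One caveat deserves emphasis. Your derivation naturally produces an additional $\omega_1\omega_2$ cross term proportional to $\Gamma(1-r_1/r_2)\,(C\gamma/(\lambda_1^{r_1}\lambda_2^{r_2}\mu_{r_1}\mu_{r_2}))^{1/r_2}$, coming from the Exponential component of $\mathbb{E}[\gamma_1^{-1/r_2}]$, and (\ref{CDFHighSNR}) does not contain it. Your justification for discarding it (that it is dominated by the first-hop term $\omega_1(\gamma/(\lambda_1^{r_1}\mu_{r_1}))^{1/r_1}$) is not airtight: for $r_1=1$, $r_2=2$ and balanced hops the two are of the same order, and the term is in any case of the same order in $\mu_{r_1}\mu_{r_2}$ as the retained $(1-\omega_1)\omega_2$ summand. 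The paper makes exactly the same omission, only implicitly, by keeping a single Taylor term for $\mathcal{I}_1$ in (\ref{I1HighSNR2}) while keeping two for $\mathcal{I}_2$ and $\mathcal{I}_3$; so your result agrees with the paper's, and your write-up has the merit of making the dropped term visible. This is a limitation of the stated corollary rather than a gap in your argument.
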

\begin{proof}
See Appendix B.
\end{proof}
It is important to mention that the asymptotic expression of the CDF given in (\ref{CDFHighSNR}) includes only summations of basic elementary functions such as the Gamma function, as compared to the exact expression of the CDF obtained in terms of the bivariate Fox's H function in (\ref{SNRCDF}), which is a quite complex function and not a standard built-in function in most of the well-known mathematical software tools such as MATHEMATICA and MATLAB. Moreover,
the expression in (\ref{CDFHighSNR}) is simpler and much more analytically tractable than (\ref{SNRCDF}), and more importantly, is very accurate and converges perfectly to the exact result in (\ref{SNRCDF}) at high SNR regime, which is illustrated in section V. Furthermore, this tractable result is of particular importance when it comes to finding the diversity order of the dual-hop UWOC system under study that depends on the type of receiver
detection being used in each hop (i.e. $r_1$ and $r_2$), and the two
UWOC hop's turbulence parameters (i.e. $a_1$, $c_1$, $a_2$, and $c_2$), that is,
\begin{align}
G_d=\min\left( \frac{1}{r_1}, \frac{a_1 c_1}{r_1} ,\frac{2}{r_2},\frac{2 a_2 c_2}{r_2} \right).
\end{align}
\begin{theorem}(Probability Density Function). By taking the derivative of (\ref{SNRCDF}) with respect to $\gamma$, the PDF of the end-to-end SNR can be shown to be given in closed-form by
\begin{align}\label{SNRPDF}
\nonumber &f_\gamma(\gamma)=\frac{\omega_1 \omega_2}{\gamma}{\rm{H}}_{1,0:1,1:0,2}^{0,1:0,1:2,0}\begin{bmatrix}
\begin{matrix}
(1;1,1)\\--
\end{matrix}
\Bigg|\begin{matrix}
(0,r_1)\\(1,1)
\end{matrix}
\Bigg|\begin{matrix}
--\\(0,1)(1,r_2)
\end{matrix}
\Bigg|
\frac{\lambda_1^{r_1} \mu_{r_1}}{\gamma},\frac{C}{\lambda_2^{r_2} \mu_{r_2}}
\end{bmatrix}\\
\nonumber &+\frac{\omega_1 \left (1-\omega_2  \right )}{\Gamma(a_2)\gamma}
{\rm{H}}_{1,0:1,1:0,2}^{0,1:0,1:2,0}\begin{bmatrix}
\begin{matrix}
(1;1,1)\\--
\end{matrix}
\Bigg|\begin{matrix}
(0,r_1)\\(1,1)
\end{matrix}
\Bigg|\begin{matrix}
--\\(0,1)\left(a_2,\frac{r_2}{c_2}\right)
\end{matrix}
\Bigg|
\frac{\lambda_1^{r_1} \mu_{r_1}}{\gamma},\frac{C}{b_2^{r_2} \mu_{r_2}}
\end{bmatrix}\\
\nonumber &+\frac{\omega_2\left ( 1-\omega_1  \right )}{\Gamma(a_1)\gamma}{\rm{H}}_{1,0:1,1:0,2}^{0,1:0,1:2,0}\begin{bmatrix}
\begin{matrix}
(1;1,1)\\--
\end{matrix}
\Bigg|\begin{matrix}
\left(1-a_1,\frac{r_1}{c_1}\right)\\(1,1)
\end{matrix}
\Bigg|\begin{matrix}
--\\(0,1)(1,r_2)
\end{matrix}
\Bigg|
\frac{b_1^{r_1} \mu_{r_1}}{\gamma},\frac{C}{\lambda_2^{r_2} \mu_{r_2}}
\end{bmatrix}\\
&+\frac{\left (1-\omega_1  \right ) \left (1-\omega_2  \right )}{\Gamma(a_1)\Gamma(a_2)\gamma}
{\rm{H}}_{1,0:1,1:0,2}^{0,1:0,1:2,0}\begin{bmatrix}
\begin{matrix}
(1;1,1)\\--
\end{matrix}
\Bigg|\begin{matrix}
\left(1-a_1,\frac{r_1}{c_1}\right)\\(1,1)
\end{matrix}
\Bigg|\begin{matrix}
--\\(0,1)\left(a_2,\frac{r_2}{c_2}\right)
\end{matrix}
\Bigg|
\frac{b_1^{r_1} \mu_{r_1}}{\gamma},\frac{C}{b_2^{r_2} \mu_{r_2}}
\end{bmatrix}.
\end{align}
\end{theorem}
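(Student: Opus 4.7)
The plan is to obtain the PDF formula (\ref{SNRPDF}) by direct term-by-term differentiation of the CDF (\ref{SNRCDF}) with respect to $\gamma$, as the theorem statement directs. Crucially, $\gamma$ appears in each of the four bivariate Fox-H terms of (\ref{SNRCDF}) only through the first argument, which has the form $z_1 = A_k/\gamma$ with $A_k\in\{\lambda_1^{r_1}\mu_{r_1},\,b_1^{r_1}\mu_{r_1}\}$; the second argument $z_2$ and the constant $1$ do not depend on $\gamma$. Hence $f_\gamma(\gamma)=dF_\gamma/d\gamma$ reduces to the sum of four derivatives of the form $\frac{d}{d\gamma}H(A_k/\gamma,z_2)$.

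First I would write each bivariate Fox-H function using its double Mellin--Barnes integral representation, i.e.\
\begin{equation*}
\mathrm{H}\!\left[z_1,z_2\right]=\frac{1}{(2\pi i)^2}\iint \Phi_0(s_1,s_2)\,\theta_1(s_1)\,\theta_2(s_2)\,z_1^{s_1}z_2^{s_2}\,ds_1\,ds_2,
\end{equation*}
where $\Phi_0$ encodes the ``leading'' parameters $(1;1,1)$, and $\theta_1,\theta_2$ are the univariate Gamma-function kernels associated with the middle and right parameter blocks. After justifying the interchange of differentiation and integration by the standard absolute-convergence conditions for Mellin--Barnes integrals, I would use
\begin{equation*}
\frac{d}{d\gamma}(A_k/\gamma)^{s_1}=-\frac{s_1}{\gamma}(A_k/\gamma)^{s_1},
\end{equation*}
which inserts an extra factor $-s_1/\gamma$ inside the integrand while leaving everything else unchanged.

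The central step is absorbing this factor $-s_1$ back into the Fox-H notation. The $(0,1)$ pair in the lower (``$b$'') position of the middle block of (\ref{SNRCDF}) contributes a $\Gamma(1+s_1)$ factor in the denominator of $\theta_1$. Using the recurrence $\Gamma(1+s_1)=s_1\,\Gamma(s_1)$, the multiplication by $s_1$ is exactly what is needed to convert $\Gamma(1+s_1)\to\Gamma(s_1)$, i.e.\ to replace the parameter $(0,1)$ by $(1,1)$ in that same position; all other parameters and both arguments of the bivariate Fox-H function remain unchanged. Tracking signs, the differentiation brings in a $-1/\gamma$ per term, which cancels the minus signs in front of the four summands of (\ref{SNRCDF}) and produces the four positive coefficients $\omega_1\omega_2/\gamma$, $\omega_1(1-\omega_2)/[\Gamma(a_2)\gamma]$, $\omega_2(1-\omega_1)/[\Gamma(a_1)\gamma]$, and $(1-\omega_1)(1-\omega_2)/[\Gamma(a_1)\Gamma(a_2)\gamma]$ appearing in (\ref{SNRPDF}).

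The main technical obstacle is the parameter bookkeeping in the rather dense bivariate Fox-H notation: one must identify precisely which Gamma factor of the Mellin--Barnes integrand is affected by the multiplication by $s_1$, verify that it is indeed the $(0,1)$ entry associated with the first variable and not some other factor, and confirm that the recurrence shifts it cleanly to $(1,1)$ in all four summands (including the cases whose middle block is $(1-a_1,r_1/c_1)/(0,1)$, where the $(1-a_1,r_1/c_1)$ numerator parameter is untouched). Justifying the differentiation-under-the-integral step is routine given the convergence region of the Fox-H integrand and requires no new ideas.
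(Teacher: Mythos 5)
Your proposal is correct and follows essentially the same route as the paper's Appendix C: the authors likewise differentiate the Mellin--Barnes double-integral representation of each CDF term, pick up the factor $-s/\gamma$ from $\frac{d}{d\gamma}(A_k/\gamma)^{s}$, and absorb $s$ via $\Gamma(1+s)=s\,\Gamma(s)$, which converts the denominator kernel $\Gamma(1+s)$ (the lower parameter $(0,1)$ of the middle block) into $\Gamma(s)$ (the parameter $(1,1)$), with the sign cancelling against the minus signs in $F_\gamma(\gamma)=1-\sum_k\mathcal{I}_k$. Your extra attention to justifying differentiation under the integral is a minor refinement the paper leaves implicit.
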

\begin{proof}
See Appendix C.
\end{proof}
\begin{theorem}(Moments).
The $n$th moments of $\gamma$ defined as $\mathbb{E}[\gamma^n]=\int_{0}^{\infty}\gamma^n\,f_\gamma(\gamma)\,d\gamma$, may be obtained in closed-form in terms of the Fox's function ${\rm{H}}_{\cdot,\cdot}^{\cdot,\cdot}[\cdot]$, which has an efficient MATHEMATICA$^{\textregistered}$ implementation in \cite{FerkanHFox}, as
\begin{align}\label{moments}
\nonumber & \mathbb{E}[\gamma^n]=\left (\frac{\omega_1 \omega_2 }{\Gamma(n)}\Gamma(1+r_1n) \left ( \lambda_1^{r_1}\mu_{r_{1}} \right )^n
+\frac{\omega_2 \left (1-\omega_1  \right )\left ( b_1^{r_1}\mu_{r_{1}} \right )^n }{\Gamma(a_1)\Gamma(n)} \Gamma\left (a_1+\frac{r_1n}{c_1} \right )  \right )\\
\nonumber &\times {\rm{H}}_{1,2}^{2,1}\left[\frac{C  }{\lambda_2^{r_2}\mu_{r_{2 }}}\left| \begin{matrix} {(1-n,1)} \\ {(0,1)(1,r_{2})} \\ \end{matrix} \right. \right]
+\left (\frac{\omega_1 \left (1-\omega_2  \right ) \Gamma(1+r_1n)\left ( \lambda_1^{r_1}\mu_{r_{1}} \right )^n }{\Gamma(a_2)\Gamma(n)}\right.\\
&\left.+\frac{ \left (1-\omega_1  \right )\left (1-\omega_2  \right ) \left ( b_1^{r_1}\mu_{r_{1}} \right )^n}{\Gamma(a_1)\Gamma(a_2)\Gamma(n)} \Gamma\left (a_1+\frac{r_1}{c_1}n  \right )  \right )
{\rm{H}}_{1,2}^{2,1}\left[\frac{C }{b_2^{r_2}\mu_{r_{2 }}}\left| \begin{matrix} {(1-n,1)} \\ {(0,1)\left (a_2,\frac{r_2}{c_2}  \right )} \\ \end{matrix} \right. \right].
\end{align}
\end{theorem}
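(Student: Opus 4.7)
Since the two hops are independent, (\ref{overallSNR}) gives
\begin{align*}
\mathbb{E}[\gamma^n]=\mathbb{E}[\gamma_1^n]\;\mathbb{E}\!\left[\left(\frac{\gamma_2}{\gamma_2+C}\right)^{\!n}\right],
\end{align*}
so the plan is to evaluate the two factors separately and then recombine them. For the first factor, inserting the PDF (\ref{SNRPDFHop}) into $\int_0^{\infty}\gamma_1^n f_{\gamma_1}(\gamma_1)\,d\gamma_1$ produces two Mellin-type integrals of ${\rm{G}}_{0,1}^{1,0}$ kernels. After a change of variable that clears the inner powers of $\gamma_1$, each one evaluates in closed form through the elementary identity $\int_0^{\infty}x^{s-1}{\rm{G}}_{0,1}^{1,0}[x|b]\,dx=\Gamma(b+s)$, yielding
\begin{align*}
\mathbb{E}[\gamma_1^n]=\omega_1(\lambda_1^{r_1}\mu_{r_1})^n\,\Gamma(1+r_1 n)+\frac{(1-\omega_1)(b_1^{r_1}\mu_{r_1})^n}{\Gamma(a_1)}\,\Gamma\!\left(a_1+\frac{r_1 n}{c_1}\right),
\end{align*}
which supplies the two $\gamma_1$-dependent prefactor groups appearing in (\ref{moments}).

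For the second factor I would invoke the Meijer G representation
\begin{align*}
(\gamma_2+C)^{-n}=\frac{C^{-n}}{\Gamma(n)}\,{\rm{G}}_{1,1}^{1,1}\!\left[\frac{\gamma_2}{C}\left|\begin{matrix}1-n\\ 0\end{matrix}\right.\right],
\end{align*}
which already accounts both for the $1/\Gamma(n)$ prefactor and for the $(1-n,1)$ entry in (\ref{moments}). Substituting this together with the two terms of $f_{\gamma_2}$ from (\ref{SNRPDFHop}) reduces the second factor to two integrals of a product of two Meijer G functions whose arguments carry distinct powers of $\gamma_2$, namely $\gamma_2^{1}$ against either $\gamma_2^{1/r_2}$ or $\gamma_2^{c_2/r_2}$. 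Prudnikov's product formula \cite[Eq.~(2.25.1/1)]{PrudinkovVol3} collapses each integral to a single-variable Fox H function, and identification of the parameter lists recovers the two ${\rm{H}}_{1,2}^{2,1}[\cdot]$ kernels in (\ref{moments}) with arguments $C/(\lambda_2^{r_2}\mu_{r_2})$ and $C/(b_2^{r_2}\mu_{r_2})$. Distributing the two contributions of $\mathbb{E}[\gamma_1^n]$ across the two Fox H contributions then produces the four cross terms of (\ref{moments}).

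The step I expect to be most delicate is the parameter bookkeeping in the product formula: because the two Meijer G kernels in each integrand carry different $\gamma_2$-exponents, a preliminary substitution is required before the product formula applies, and the Gamma-function shifts it introduces must be tracked carefully to recover exactly the rows $(0,1)(1,r_2)$ and $(0,1)(a_2,r_2/c_2)$ stated in the theorem. Every other step is a routine Mellin-transform manipulation.
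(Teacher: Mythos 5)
Your proposal is correct, and it reaches (\ref{moments}) by a genuinely different route than the paper. The paper's proof (Appendix D) starts from the end-to-end PDF of Theorem 2, rewrites the bivariate Fox's H functions as integrals of products of two univariate H functions via \cite[Eq.(2.3)]{HFoxIntegrals}, performs the $\gamma$-integration first using the Mellin transform of the H function (\cite[Eqs.(1.58) and (2.8)]{HFunction}), which produces exactly your $\Gamma(1+r_1n)(\lambda_1^{r_1}\mu_{r_1})^n$ and $\Gamma(a_1+r_1n/c_1)(b_1^{r_1}\mu_{r_1})^n/\Gamma(a_1)$ prefactors, and then identifies the residual integral $\int_0^\infty e^{-x}x^{n-1}{\rm{H}}_{0,2}^{2,0}[\alpha x]\,dx$ as an ${\rm{H}}_{1,2}^{2,1}$ via \cite[Eq.(2.8.4)]{HTranforms}. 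You instead never touch the joint PDF: you exploit the multiplicative structure $\gamma=\gamma_1\cdot\gamma_2/(\gamma_2+C)$ and the independence of the hops to factorize $\mathbb{E}[\gamma^n]=\mathbb{E}[\gamma_1^n]\,\mathbb{E}[(\gamma_2/(\gamma_2+C))^n]$, compute the first factor by elementary Gamma integrals, and handle the second with the ${\rm{G}}_{1,1}^{1,1}$ representation of $(\gamma_2+C)^{-n}$ (which indeed supplies both the $1/\Gamma(n)$ and the $(1-n,1)$ parameter) plus a Mellin--Barnes evaluation; distributing the two terms of each factor gives the four cross terms. I verified that your route reproduces the stated kernels, including the rows $(0,1)(1,r_2)$ and $(0,1)(a_2,r_2/c_2)$, after the change of variable you flag. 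Your approach buys self-containedness and transparency — it requires neither Theorem 2 nor the bivariate H-function machinery, and makes the factorized structure of (\ref{moments}) (prefactors depending only on hop 1, H-kernels only on hop 2 and $C$) conceptually obvious — while the paper's approach buys uniformity, reusing the same contour-integral representation that underlies all its other derivations.
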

\begin{proof}
See Appendix D.
\end{proof}
Note that the above result for the moments can be used for the calculation of the higher-order amount of fading which is an important performance measure defined in \cite{AFN} as $AF_\gamma^{(n)}=\mathbb{E}[\gamma^n]/\mathbb{E}[\gamma]^n-1$.

\section{Performance Metrics}
This section provides new analytical expressions for the exact and
asymptotic key performance metrics of the dual-hop UWOC system, in the presence of air bubbles and temperature gradients for IM/DD and heterodyne techniques.

\subsection{Outage Probability}
The outage probability is a fundamental performance measure of UWOC communication systems. It is encountered when the end-to-end SNR, $\gamma$, falls below a certain specified threshold $\gamma_{\rm{th}}$.
By setting $\gamma=\gamma_{\text{th}}$ in (\ref{SNRCDF}), the end-to-end outage probability of the dual-hop UWOC system in operation under both heterodyne detection as well as IM/DD can be easily obtained in exact closed-form as
\begin{align}\label{OP}
P_{\text{out}}(\gamma_{\text{th}})=F_\gamma(\gamma_{\text{th}}).
\end{align}

\subsection{Average Bit-Error Rate}
\subsubsection{Exact Analysis}
A generalized expression for the average BER for a variety of modulation schemes under both heterodyne and IM/DD techniques can be expressed as \cite{dualhopFSO}
\begin{align}\label{BERdef}
P_e=\frac{\delta}{2 \Gamma(p)}\sum_{k=1}^{n}\int_{0}^{\infty}\Gamma(p,q_k\,\gamma) f_\gamma(\gamma)\,d\gamma,
\end{align}
where $\Gamma(\cdot,\cdot)$ is the upper incomplete Gamma function \cite[Eq.(8.350/2)]{Tableofintegrals}, $n$, $\delta$, $p$, and $q_k$ vary depending on the modulation technique and the type of detection (i.e IM/DD or heterodyne detection) and are listed in \cite[Table III]{UWOCtcom}. It is important to mention here that for IM/DD technique, we investigate the average BER for on-off keying (OOK) modulation since it is the most commonly used intensity modulation technique in practical UWOC systems due to its simplicity and resilience to laser nonlinearity. For heterodyne detection and in addition to binary modulation schemes, we analyze the average BER for multilevel phase shift keying (MPSK) and quadrature amplitude (MQAM) that are commonly deployed in coherent systems.

\begin{theorem}(Average Bit-Error Rate).
By using (\ref{BERdef}), a unified expression for the average BER for all these modulation schemes of a dual-hop UWOC system operating under both IM/DD and heterodyne techniques can be derived in exact closed-form in terms of the bivariate Fox's H function as
\begin{align}\label{SNRBER}
\nonumber P_e&=\delta\sum_{k=1}^{n}\left\{\frac{1}{2}-\frac{\omega_1 \omega_2}{2\Gamma(p)}{\rm{H}}_{1,0:1,2:0,2}^{0,1:1,1:2,0}\begin{bmatrix}
\begin{matrix}
(1;1,1)\\--
\end{matrix}
\Bigg|\begin{matrix}
(0,r_1)\\(p,1)(0,1)
\end{matrix}
\Bigg|\begin{matrix}
--\\(0,1)(1,r_2)
\end{matrix}
\Bigg|
q_k\lambda_1^{r_1} \mu_{r_1},\frac{C}{\lambda_2^{r_2} \mu_{r_2}}
\end{bmatrix}\right.\\
\nonumber &\left.-\frac{\omega_1 (1-\omega_2)}{2\Gamma(a_2)\Gamma(p)}{\rm{H}}_{1,0:1,2:0,2}^{0,1:1,1:2,0}\begin{bmatrix}
\begin{matrix}
(1;1,1)\\--
\end{matrix}
\Bigg|\begin{matrix}
(0,r_1)\\(p,1)(0,1)
\end{matrix}
\Bigg|\begin{matrix}
--\\(0,1)\left(a_2,\frac{r_2}{c_2}\right)
\end{matrix}
\Bigg|
q_k\lambda_1^{r_1} \mu_{r_1},\frac{C}{b_2^{r_2} \mu_{r_2}}
\end{bmatrix}\right.\\
\nonumber &\left.-\frac{\omega_2 (1-\omega_1)}{2\Gamma(a_1)\Gamma(p)}{\rm{H}}_{1,0:1,2:0,2}^{0,1:1,1:2,0}\begin{bmatrix}
\begin{matrix}
(1;1,1)\\--
\end{matrix}
\Bigg|\begin{matrix}
\left(1-a_1,\frac{r_1}{c_1}\right)\\(p,1)(0,1)
\end{matrix}
\Bigg|\begin{matrix}
--\\(0,1)(1,r_2)
\end{matrix}
\Bigg|
q_kb_1^{r_1} \mu_{r_1},\frac{C}{\lambda_2^{r_2} \mu_{r_2}}
\end{bmatrix}\right.\\
&\left.-\frac{(1-\omega_1) (1-\omega_2)}{2\Gamma(a_1)\Gamma(a_2)\Gamma(p)}{\rm{H}}_{1,0:1,2:0,2}^{0,1:1,1:2,0}\begin{bmatrix}
\begin{matrix}
(1;1,1)\\--
\end{matrix}
\Bigg|\begin{matrix}
\left(1-a_1,\frac{r_1}{c_1}\right)\\(p,1)(0,1)
\end{matrix}
\Bigg|\begin{matrix}
--\\(0,1)\left(a_2,\frac{r_2}{c_2}\right)
\end{matrix}
\Bigg|
q_kb_1^{r_1} \mu_{r_1},\frac{C}{b_2^{r_2} \mu_{r_2}}
\end{bmatrix}\right\}.
\end{align}
\end{theorem}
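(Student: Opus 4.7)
The plan is to reduce (\ref{BERdef}) to an integral against the CDF of Theorem~1 via integration by parts, substitute (\ref{SNRCDF}), and then turn each of the four resulting Fox's H integrals into a new bivariate Fox's H by a Mellin-Barnes manipulation.

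First I would use $\tfrac{d}{d\gamma}\Gamma(p,q_k\gamma)=-q_k^{p}\gamma^{p-1}e^{-q_k\gamma}$ together with the boundary values $\Gamma(p,0)=\Gamma(p)$, $\Gamma(p,\infty)=0$, $F_\gamma(0)=0$, $F_\gamma(\infty)=1$ to obtain
\begin{equation*}
\int_0^\infty \Gamma(p,q_k\gamma)\, f_\gamma(\gamma)\, d\gamma \;=\; q_k^{p}\int_0^\infty \gamma^{p-1}e^{-q_k\gamma}\,F_\gamma(\gamma)\, d\gamma .
\end{equation*}
Substituting (\ref{SNRCDF}), the constant ``$1$'' in $F_\gamma$ produces $\int_0^\infty \gamma^{p-1}e^{-q_k\gamma}d\gamma=\Gamma(p)/q_k^{p}$, which, after the prefactor $\delta/(2\Gamma(p))$ of (\ref{BERdef}), gives the $\tfrac12$ inside the curly braces of (\ref{SNRBER}). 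Each of the four Fox's H summands of (\ref{SNRCDF}) then yields an integral of the type
\begin{equation*}
q_k^{p}\int_0^\infty \gamma^{p-1}e^{-q_k\gamma}\,{\rm{H}}_{1,0:1,1:0,2}^{0,1:0,1:2,0}\!\left[\cdots\,\bigg|\,\tfrac{A_j}{\gamma},\,B_j\right]d\gamma,
\end{equation*}
with $(A_j,B_j)$ inherited directly from the $j$-th summand of the CDF.

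Next I would insert the defining double Mellin-Barnes contour representation of the bivariate Fox's H, with the convention under which the first argument $A_j/\gamma$ enters the integrand as $(\gamma/A_j)^{s}$ and the second as $B_j^{-t}$. By Fubini, justified on suitable Bromwich contours via the Stirling estimate on the gamma factors, the order of the triple integral can be swapped, and the innermost $\gamma$-integral collapses to $\int_0^\infty\gamma^{p+s-1}e^{-q_k\gamma}\,d\gamma=\Gamma(p+s)\,q_k^{-p-s}$. The combination $q_k^{-p-s}A_j^{-s}$ together with the outer $q_k^{p}$ turns the first Fox's H argument from $A_j/\gamma$ into $q_kA_j$, while the new factor $\Gamma(p+s)$ gets absorbed into the $s$-variable gamma product. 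By direct inspection of the bivariate Fox's H definition, this extra gamma is precisely what one gets by inserting a new lower parameter $(p,1)$ at the head of the second (the $s$-variable) parameter block, i.e.\ by upgrading its orders from $(0,1;1,1)$ to $(1,1;1,2)$. This is exactly the change from ${\rm{H}}_{1,0:1,1:0,2}^{0,1:0,1:2,0}$ appearing in (\ref{SNRCDF}) to ${\rm{H}}_{1,0:1,2:0,2}^{0,1:1,1:2,0}$ appearing in (\ref{SNRBER}). Summing the four contributions with their coefficients, each now divided by $\Gamma(p)$, then reproduces (\ref{SNRBER}).

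The main obstacle is essentially bookkeeping: matching the sign conventions of the Mellin-Barnes contour against the parameter-slot convention so that the extra factor $\Gamma(p+s)$ is deposited in the correct position of the augmented bivariate Fox's H, and choosing the contours so that no poles are crossed during the Fubini swap. Once these technicalities are settled, the four summands line up with (\ref{SNRBER}) by direct comparison.
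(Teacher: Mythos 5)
Your proposal is correct and follows essentially the same route as the paper's Appendix E: the only cosmetic difference is that you integrate by parts first and substitute the CDF of Theorem~1, whereas the paper substitutes the Mellin--Barnes form of the PDF and applies the Mellin transform of $\Gamma(p,q_k\gamma)$ directly, but both yield the identical intermediate double contour integral with the extra factor $\Gamma(p-s)$ (your $\Gamma(p+s)$ under the opposite sign convention) and the leading $\tfrac12$. Your identification of that factor with the prepended lower parameter $(p,1)$ in the $s$-block, upgrading ${\rm{H}}_{1,0:1,1:0,2}^{0,1:0,1:2,0}$ to ${\rm{H}}_{1,0:1,2:0,2}^{0,1:1,1:2,0}$ with argument $q_kA_j$, matches the paper's final step.
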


\begin{proof}
See Appendix E.
\end{proof}
\subsubsection{High SNR Analysis}
The average BER expression in (\ref{BERdef}) may be re-written in terms of the CDF of $\gamma$ by using integration by parts as
\begin{align}\label{BERDEF2}
P_e= \frac{\delta }{2\Gamma(p)}\sum_{k=1}^{n}q_k^p\int_{0}^{\infty}\gamma^{p-1}e^{-q_k\gamma}F_\gamma(\gamma)\,d\gamma.
\end{align}
Substituting (\ref{CDFHighSNR}) into (\ref{BERDEF2}) then using \cite[Eq.(3.381/4)]{Tableofintegrals}, a very tight asymptotic expression of the average BER for a variety of modulation techniques can be obtained at high SNR in terms of basic elementary functions as shown in (\ref{BERHighSNR}).

\begin{align}\label{BERHighSNR}
\nonumber &P_e\underset{\mu_{r_{1}},\mu_{r_{2}}\to \infty}{\mathop{\approx }}
\frac{\delta \omega_1 \Gamma\left ( p+\frac{1}{r_1} \right )}{2\Gamma(p)} \sum_{k=1}^{n}\left (\frac{1}{q_k\lambda_1^{r_1}\mu_{r_{1 }}}  \right )^{\frac{1}{r_1}}
+\frac{\delta(1-\omega_1)\Gamma\left ( p+\frac{a_1 c_1}{r_1} \right )}{2\Gamma(a_1+1)\Gamma(p)}\sum_{k=1}^{n}
\left ( \frac{1}{q_k b_1^{r_1}\mu_{r_{1}}} \right )^{\frac{a_1 c_1}{r_1}}\\
\nonumber &+\frac{\delta\omega_2(1-\omega_1)\Gamma\left ( a_1-\frac{r_1}{c_1 r_2} \right )\Gamma\left ( p+\frac{1}{r_2} \right )}{2\Gamma(a_1)\Gamma(p)}\sum_{k=1}^{n} \left ( \frac{C }{q_k b_1^{r_1}\lambda_2^{r_2}\mu_{r_{1}}\mu_{r_{2}}} \right )^{\frac{1}{r_2}}+\frac{\delta (1-\omega_2)\Gamma\left ( p+\frac{a_2 c_2}{r_2} \right )}{2\Gamma(a_2+1)\Gamma(p)}\\
&\times\left (\frac{\omega_1}{\lambda_1^{\frac{r_1 a_2 c_2}{r_2}}}\Gamma\left ( 1-\frac{r_1 a_2 c_2}{r_2} \right )\right.
 \left. +\frac{(1-\omega_1)}{\Gamma(a_1)b_1^{\frac{r_1 a_2 c_2}{r_2}}}\Gamma\left ( a_1-\frac{r_1 a_2 c_2}{c_1 r_2} \right ) \right )
\sum_{k=1}^{n}\left ( \frac{C }{q_k b_2^{r_2}\mu_{r_{1 }}\mu_{r_{2 }}} \right )^{\frac{a_2 c_2}{r_2}}.
\end{align}\normalsize

\subsection{Ergodic Capacity}
The ergodic capacity of dual-hop UWOC communication systems can be given by \cite[Eq.(26)]{lapidoth}, \cite[Eq.(7.43)]{advancedfso}
\begin{align}\label{CAPDEF}
\overline{C}\triangleq \mathbb{E}[\ln(1+\tau\,\gamma)]=\int_{0}^{\infty}\ln(1+\tau\,\gamma)f_\gamma(\gamma)\,d\gamma,
\end{align}
where $\tau=e/(2\pi)$ is associated with the IM/DD technique (i.e. $r=2$) and $\tau=1$ is associated with the heterodyne detection technique (i.e. $r=1$). It is worthy to mention that the expression in (\ref{CAPDEF}) is exact when $r=1$ while it is a lower-bound when $r=2$.

\begin{theorem}(Ergodic Capacity). A unified expression for the ergodic capacity of AF fixed gain dual-hop UWOC systems over the mixture EGG fading channels can be derived in exact closed-form in terms of the bivariate Fox's H function as
\begin{align}\label{SNRCapacity}
\nonumber \overline{C}&=\omega_1 \omega_2{\rm{H}}_{1,0:2,2:0,2}^{0,1:1,2:2,0}\begin{bmatrix}
\begin{matrix}
(1;1,1)\\--
\end{matrix}
\Bigg|\begin{matrix}
(0,r_1)(1,1)\\(1,1)(0,1)
\end{matrix}
\Bigg|\begin{matrix}
--\\(0,1)(1,r_2)
\end{matrix}
\Bigg|
\tau\lambda_1^{r_1} \mu_{r_1},\frac{C}{\lambda_2^{r_2} \mu_{r_2}}
\end{bmatrix}\\
\nonumber &+\frac{\omega_1 \left (1-\omega_2  \right )}{\Gamma(a_2)}
{\rm{H}}_{1,0:2,2:0,2}^{0,1:1,2:2,0}\begin{bmatrix}
\begin{matrix}
(1;1,1)\\--
\end{matrix}
\Bigg|\begin{matrix}
(0,r_1)(1,1)\\(1,1)(0,1)
\end{matrix}
\Bigg|\begin{matrix}
--\\(0,1)\left(a_2,\frac{r_2}{c_2}\right)
\end{matrix}
\Bigg|
\tau\lambda_1^{r_1} \mu_{r_1},\frac{C}{b_2^{r_2} \mu_{r_2}}
\end{bmatrix}\\
\nonumber &+\frac{\omega_2 (1-\omega_1)}{\Gamma(a_1)}{\rm{H}}_{1,0:2,2:0,2}^{0,1:1,2:2,0}\begin{bmatrix}
\begin{matrix}
(1;1,1)\\--
\end{matrix}
\Bigg|\begin{matrix}
(1,1)\left ( 1-a_1,\frac{r_1}{c_1} \right )\\(1,1)(0,1)
\end{matrix}
\Bigg|\begin{matrix}
--\\(0,1)(1,r_2)
\end{matrix}
\Bigg|
\tau b_1^{r_1} \mu_{r_1},\frac{C}{\lambda_2^{r_2} \mu_{r_2}}
\end{bmatrix}\\
 &+\frac{ (1-\omega_1)(1-\omega_2)}{\Gamma(a_1)\Gamma(a_2)}{\rm{H}}_{1,0:2,2:0,2}^{0,1:1,2:2,0}\begin{bmatrix}
\begin{matrix}
(1;1,1)\\--
\end{matrix}
\Bigg|\begin{matrix}
(1,1)\left ( 1-a_1,\frac{r_1}{c_1} \right )\\(1,1)(0,1)
\end{matrix}
\Bigg|\begin{matrix}
--\\(0,1)\left ( a_2,\frac{r_2}{c_2} \right )
\end{matrix}
\Bigg|
\tau b_1^{r_1} \mu_{r_1},\frac{C}{b_2^{r_2} \mu_{r_2}}
\end{bmatrix}.
\end{align}
\end{theorem}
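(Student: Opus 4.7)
The plan is to start from the definition \eqref{CAPDEF} of the ergodic capacity and substitute the closed-form PDF \eqref{SNRPDF} derived in Theorem~2. This yields $\overline{C}$ as a sum of four integrals of the form
\begin{align*}
\mathcal{I} = \int_{0}^{\infty}\frac{\ln(1+\tau\gamma)}{\gamma}\,{\rm{H}}_{1,0:1,1:0,2}^{0,1:0,1:2,0}\!\left[\cdots\Big|\tfrac{A}{\gamma},\,B\right]d\gamma,
\end{align*}
where the constants $A,B$ and the parameter sets differ between the four terms but share the same structural pattern. I would treat these four integrals in a uniform way and add the results at the end, since the four terms involve identical manipulations in the integration variable $\gamma$.

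Next, I would rewrite the logarithm as a Meijer's G function using the standard identity $\ln(1+\tau\gamma) = {\rm{G}}_{2,2}^{1,2}\!\left[\tau\gamma\,\big|\,{1,1\atop 1,0}\right]$, which itself is a particular Fox's H function, and replace each bivariate Fox's H function by its defining double Mellin--Barnes contour integral. After interchanging the order of integration (justified under the usual absolute convergence conditions on the Mellin--Barnes contours and the asymptotic tail of the log), the innermost integral over $\gamma$ becomes a Mellin-type integral $\int_0^\infty \gamma^{s-1}\,{\rm{G}}_{2,2}^{1,2}[\tau\gamma|\cdots]\,d\gamma$ of the logarithm kernel against a single complex power. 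This integral can be evaluated in closed form using \cite[Eq.~(2.25.1/1)]{PrudinkovVol3}, producing a ratio of Gamma functions of the form $\Gamma(s)\Gamma(-s)/\Gamma(1-s)$ evaluated along the appropriate contour variable (here in $r_1 s$, since the factor $A/\gamma$ contributes $r_1$ into the exponent of $\gamma$).

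Carrying out the $\gamma$-integration introduces exactly the extra Gamma factors ${\Gamma(-s)\Gamma(1+s)}/{\Gamma(1-s)}$ on the first Mellin variable of the original bivariate representation. Reading this off in the Mellin--Barnes form, I would identify the new numerator pair $(0,r_1)(1,1)$ (from the surviving $\Gamma(-s)$ and the new $\Gamma(1+s)$) and the new denominator pair $(1,1)(0,1)$ (from the $\Gamma(1-s)$ introduced by the log and the original $(1,1)$), which is precisely the block structure of the first Mellin variable appearing in \eqref{SNRCapacity}. The second Mellin variable (in $B$) is untouched by the $\gamma$-integration, so its block is inherited verbatim from \eqref{SNRPDF}. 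Applying this procedure to each of the four terms of the PDF, with the appropriate choice of $A \in \{\lambda_1^{r_1}\mu_{r_1},\, b_1^{r_1}\mu_{r_1}\}$, $B \in \{C/(\lambda_2^{r_2}\mu_{r_2}),\, C/(b_2^{r_2}\mu_{r_2})\}$, and the corresponding upper parameters $(0,r_1)$ or $(1-a_1,r_1/c_1)$, yields the four terms in \eqref{SNRCapacity}.

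The main obstacle, as in Theorems~1 and~2, is the bookkeeping of the bivariate Mellin--Barnes parameter sets and the verification that interchanging the triple integral (the two Mellin--Barnes contours and the $\gamma$-integral) is legitimate. The convergence of the $\gamma$-integral near $\gamma=0$ is delicate because $\ln(1+\tau\gamma)\sim\tau\gamma$ vanishes linearly while the H-function kernel behaves as a power, so the contours must be shifted so that the real parts of all Gamma-function arguments lie in the convergence strip; this is standard but must be checked term by term. Once the interchange is justified, the remainder of the argument reduces to recognizing the resulting double Mellin--Barnes integral as the claimed bivariate Fox's H function, which is a direct pattern match with the definition in \cite{HFoxIntegrals}.
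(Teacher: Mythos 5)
Your proposal follows essentially the same route as the paper's Appendix~F: substitute the Mellin--Barnes representation of the PDF from Theorem~2 into \eqref{CAPDEF}, replace $\ln(1+\tau\gamma)$ by its Meijer's G representation, evaluate the $\gamma$-integral as a Mellin transform (the paper cites \cite[Eq.(7.811/4)]{Tableofintegrals} rather than Prudnikov, but it is the same computation), and read off the resulting double contour integral as the bivariate Fox's H function via \cite[Eq.(1.1)]{HFoxIntegrals}. The parameter bookkeeping you describe matches the blocks $(0,r_1)(1,1)$ over $(1,1)(0,1)$ in \eqref{SNRCapacity}, so the sketch is correct.
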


\begin{proof}
See Appendix F.
\end{proof}

\section{Numerical Results}
In this section, we provide some numerical results to illustrate the
outage probability, the average BER, and the ergodic capacity of the dual-hop UWOC system in the presence of air bubbles and temperature gradients for both fresh and salty waters under both IM/DD as well heterodyne techniques. Monte-Carlo simulations are also included to prove the correctness of the obtained results.
Unless otherwise
specified, we consider the case where the S-R and R-D hops are balanced, i.e. $\overline{\gamma}_1=\overline{\gamma}_2=\overline{\gamma}$. The parameter $C$ can be evaluated using (\ref{Gsq}).
The parameters $\omega_i$, $\lambda_i$, $a_i$, $b_i$ and $c_i$ with $i=1, 2$ for different levels of air bubbles under thermally uniform and gradient-based UWOC channels are obtained from laboratory experiments as listed in \cite[Table I]{UWOCtcom} and \cite[Table II]{UWOCtcom}, respectively. For readers clarification, to the best of our knowledge, a mathematical formulation for evaluating these parameters is not available in the open literature and can be considered as an open research challenge.
It is noteworthy that the experimental temperature gradient values reported in \cite{Oubei:17} are nearly 10 times higher than sea surface temperature (SST) of Gulf Stream and Kuroshio currents. According to \cite{William94}, the largest temperature difference between the reservoir of warm water at the ocean surface and the reservoir of cold water deeper in the ocean varies between 22 $^\circ C$ and 25 $^\circ C$. Therefore, the maximum temperature gradient level used in our experiment in \cite{Oubei:17}
mimics almost any realistic scenario encountered in the ocean, taking into account extreme conditions as well.
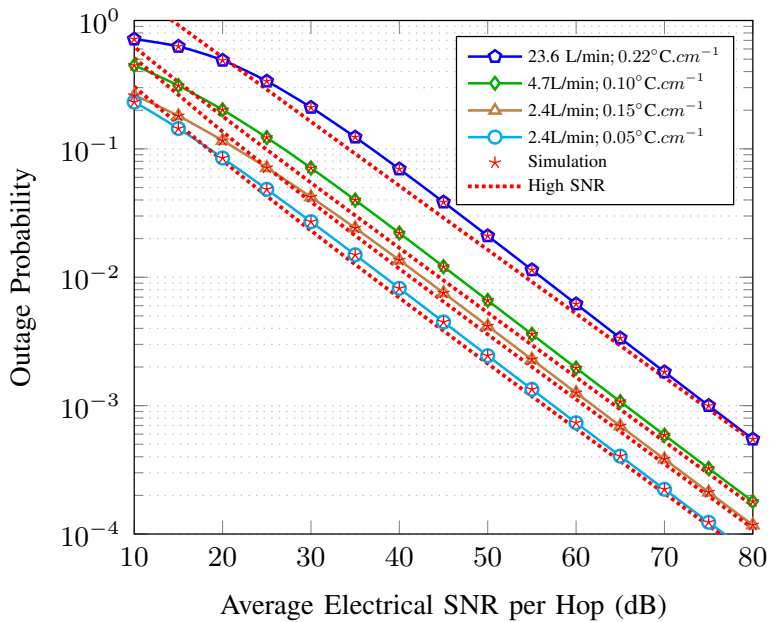
\begin{figure}[!h]
   \begin{center}
\begin{tikzpicture}[scale=1.2]
    \begin{axis}[font=\footnotesize,
      xmin=10,xmax=80,ymin=1e-04,ymax=1, ymode=log, xlabel=Average Electrical SNR per Hop (dB), ylabel= Outage Probability,
     legend style={nodes=right}, legend pos= north east,legend style={nodes={scale=0.63, transform shape}},
    xminorgrids,
    grid style={dotted},
    yminorgrids,]
    \addplot[smooth,blue,mark=pentagon*,mark options={solid},every mark/.append style={solid, fill=white}, thick] plot coordinates {
(0.000000,8.565300e-01)(5.000000,7.784690e-01)(10.000000,7.158760e-01)(15.000000,6.274630e-01)(20.000000,4.895210e-01)(25.000000,3.366800e-01)(30.000000,2.102920e-01)(35.000000,1.233430e-01)(40.000000,6.964320e-02)(45.000000,3.847320e-02)(50.000000,2.100780e-02)(55.000000,1.141070e-02)(60.000000,6.189870e-03)(65.000000,3.361610e-03)(70.000000,1.830340e-03)(75.000000,9.999140e-04)(80.000000,5.482290e-04)
};
    \addplot[smooth,
green!70!black,mark=diamond*,mark options={solid},every mark/.append style={solid, fill=white}, thick] plot coordinates {
(0.000000,6.928400e-01)(5.000000,5.949510e-01)(10.000000,4.462920e-01)(15.000000,3.108560e-01)(20.000000,2.015070e-01)(25.000000,1.224440e-01)(30.000000,7.098890e-02)(35.000000,3.994120e-02)(40.000000,2.208150e-02)(45.000000,1.209480e-02)(50.000000,6.598170e-03)(55.000000,3.597000e-03)(60.000000,1.963440e-03)(65.000000,1.074360e-03)(70.000000,5.896220e-04)(75.000000,3.246150e-04)(80.000000,1.792680e-04)
   };

    \addplot[smooth,brown,mark=triangle*,mark options={solid},every mark/.append style={solid, fill=white}, thick] plot coordinates {
(0.000000,9.999670e-01)(5.000000,3.177640e-01)(10.000000,2.597610e-01)(15.000000,1.805730e-01)(20.000000,1.160240e-01)(25.000000,7.125960e-02)(30.000000,4.213520e-02)(35.000000,2.417750e-02)(40.000000,1.359240e-02)(45.000000,7.547870e-03)(50.000000,4.164170e-03)(55.000000,2.291300e-03)(60.000000,1.260450e-03)(65.000000,6.941960e-04)(70.000000,3.830670e-04)(75.000000,2.118590e-04)(80.000000,1.174420e-04)   };
    \addplot[smooth,cyan,mark=*,mark options={solid},every mark/.append style={solid, fill=white}, thick] plot coordinates {
(0.000000,9.980230e-01)(5.000000,3.287650e-01)(10.000000,2.312130e-01)(15.000000,1.441700e-01)(20.000000,8.489230e-02)(25.000000,4.844890e-02)(30.000000,2.709550e-02)(35.000000,1.496020e-02)(40.000000,8.201450e-03)(45.000000,4.482740e-03)(50.000000,2.449570e-03)(55.000000,1.340500e-03)(60.000000,7.353420e-04)(65.000000,4.045290e-04)(70.000000,2.232000e-04)(75.000000,1.235020e-04)(80.000000,6.851640e-05)
};

      \addplot[smooth, mark=star,red,only marks] plot coordinates {
(0.000000,8.565300e-01)(5.000000,7.784690e-01)(10.000000,7.158760e-01)(15.000000,6.274630e-01)(20.000000,4.895210e-01)(25.000000,3.366800e-01)(30.000000,2.102920e-01)(35.000000,1.233430e-01)(40.000000,6.964320e-02)(45.000000,3.847320e-02)(50.000000,2.100780e-02)(55.000000,1.141070e-02)(60.000000,6.189870e-03)(65.000000,3.361610e-03)(70.000000,1.830340e-03)(75.000000,9.999140e-04)(80.000000,5.482290e-04)
   };
      \addplot[smooth, mark=star,red,only marks] plot coordinates {
(0.000000,6.928400e-01)(5.000000,5.949510e-01)(10.000000,4.462920e-01)(15.000000,3.108560e-01)(20.000000,2.015070e-01)(25.000000,1.224440e-01)(30.000000,7.098890e-02)(35.000000,3.994120e-02)(40.000000,2.208150e-02)(45.000000,1.209480e-02)(50.000000,6.598170e-03)(55.000000,3.597000e-03)(60.000000,1.963440e-03)(65.000000,1.074360e-03)(70.000000,5.896220e-04)(75.000000,3.246150e-04)(80.000000,1.792680e-04)
   };
      \addplot[smooth, mark=star,red,only marks] plot coordinates {
(0.000000,9.999670e-01)(5.000000,3.177640e-01)(10.000000,2.597610e-01)(15.000000,1.805730e-01)(20.000000,1.160240e-01)(25.000000,7.125960e-02)(30.000000,4.213520e-02)(35.000000,2.417750e-02)(40.000000,1.359240e-02)(45.000000,7.547870e-03)(50.000000,4.164170e-03)(55.000000,2.291300e-03)(60.000000,1.260450e-03)(65.000000,6.941960e-04)(70.000000,3.830670e-04)(75.000000,2.118590e-04)(80.000000,1.174420e-04)
};
      \addplot[smooth, mark=star,red,only marks] plot coordinates {
(0.000000,9.980230e-01)(5.000000,3.287650e-01)(10.000000,2.312130e-01)(15.000000,1.441700e-01)(20.000000,8.489230e-02)(25.000000,4.844890e-02)(30.000000,2.709550e-02)(35.000000,1.496020e-02)(40.000000,8.201450e-03)(45.000000,4.482740e-03)(50.000000,2.449570e-03)(55.000000,1.340500e-03)(60.000000,7.353420e-04)(65.000000,4.045290e-04)(70.000000,2.232000e-04)(75.000000,1.235020e-04)(80.000000,6.851640e-05)
 };

      \addplot[smooth,red,densely dotted, very  thick] plot coordinates {
(0.000000,5.048270e+00)(5.000000,2.894560e+00)(10.000000,1.636600e+00)(15.000000,9.203350e-01)(20.000000,5.167530e-01)(25.000000,2.901570e-01)(30.000000,1.630140e-01)(35.000000,9.164600e-02)(40.000000,5.155880e-02)(45.000000,2.902600e-02)(50.000000,1.635180e-02)(55.000000,9.218190e-03)(60.000000,5.200440e-03)(65.000000,2.936090e-03)(70.000000,1.659050e-03)(75.000000,9.382880e-04)(80.000000,5.311710e-04)
   };
      \addplot[smooth,red,densely dotted, very  thick] plot coordinates {
(0.000000,2.407900e+00)(5.000000,1.204860e+00)(10.000000,6.231190e-01)(15.000000,3.311950e-01)(20.000000,1.795040e-01)(25.000000,9.852480e-02)(30.000000,5.450830e-02)(35.000000,3.030920e-02)(40.000000,1.690910e-02)(45.000000,9.454480e-03)(50.000000,5.294520e-03)(55.000000,2.968200e-03)(60.000000,1.665340e-03)(65.000000,9.349050e-04)(70.000000,5.250710e-04)(75.000000,2.949890e-04)(80.000000,1.657670e-04)
   };
      \addplot[smooth,red,densely dotted, very  thick] plot coordinates {
(0.000000,-2.203200e+02)(5.000000,-7.577420e-01)(10.000000,5.103300e-01)(15.000000,2.635970e-01)(20.000000,1.360050e-01)(25.000000,7.192160e-02)(30.000000,3.876060e-02)(35.000000,2.116270e-02)(40.000000,1.165580e-02)(45.000000,6.457720e-03)(50.000000,3.592400e-03)(55.000000,2.004190e-03)(60.000000,1.120440e-03)(65.000000,6.273260e-04)(70.000000,3.516240e-04)(75.000000,1.972510e-04)(80.000000,1.107200e-04)
   };
      \addplot[smooth,red,densely dotted, very  thick] plot coordinates {
(0.000000,1.259470e+00)(5.000000,6.102260e-01)(10.000000,3.057700e-01)(15.000000,1.556750e-01)(20.000000,8.088130e-02)(25.000000,4.287240e-02)(30.000000,2.309250e-02)(35.000000,1.258560e-02)(40.000000,6.917200e-03)(45.000000,3.824790e-03)(50.000000,2.124120e-03)(55.000000,1.183410e-03)(60.000000,6.608650e-04)(65.000000,3.697000e-04)(70.000000,2.070870e-04)(75.000000,1.161120e-04)(80.000000,6.515130e-05)
   };
   \legend{23.6 L/min$;0.22^\circ$C.$cm^{-1}$,4.7L/min$;0.10^\circ$C.$cm^{-1}$,
   2.4L/min$;0.15^\circ$C.$cm^{-1}$,2.4L/min$;0.05^\circ$C.$cm^{-1}$, Simulation,,,,High SNR};

%
  \end{axis}
    \end{tikzpicture}
   \caption{Outage probability for different levels of air bubbles and gradient temperatures in the case of IM/DD technique along with the asymptotic results at high SNR for $\gamma_{\rm{th}}=0$ dB.}
      \label{fig:OP1new}
          \end{center}
\end{figure}

Fig.~\ref{fig:OP1new} demonstrates the impact of the temperature gradient as well as the air bubbles on
the end-to-end outage probability of the dual-hop UWOC system operating under the IM/DD technique.
 Clearly, we can observe from Fig.~\ref{fig:OP1new} that the analytical results provide a perfect match to the MATLAB simulated results proving the accuracy of our derivations. As expected, it can be shown from Fig.~\ref{fig:OP1new} that the higher is the level of the air bubbles and/or the temperature gradient, the higher is the value of the scintillation index and therefore, the stronger is the turbulence leading to a performance degradation.
For instance, at SNR=30 dB, $P_{\text{out}}=2.71 \times 10^{-2}$ for a temperature gradient equal to $0.05^\circ$C.$cm^{-1}$ and $\sigma_I^2=0.1484$ and it increases to $P_{\text{out}}=4.21 \times 10^{-2}$ for a temperature gradient of $0.15^\circ$C.$cm^{-1}$ and $\sigma_I^2=0.1915$, for a fixed bubbles level (BL), i.e. BL=2.4 L/min. This demonstrates the role of the temperature gradient in introducing severe irradiance fluctuations and hence severe turbulence conditions.
The asymptotic results of the outage probability at high SNR values obtained by using (\ref{CDFHighSNR}) are also included in Fig.~\ref{fig:OP1new}. As clearly seen from this figure, the asymptotic results of the outage probability are in a perfect match with the analytical results in the high SNR regime. This justifies the accuracy and the tightness of the derived asymptotic expression in (\ref{CDFHighSNR}).
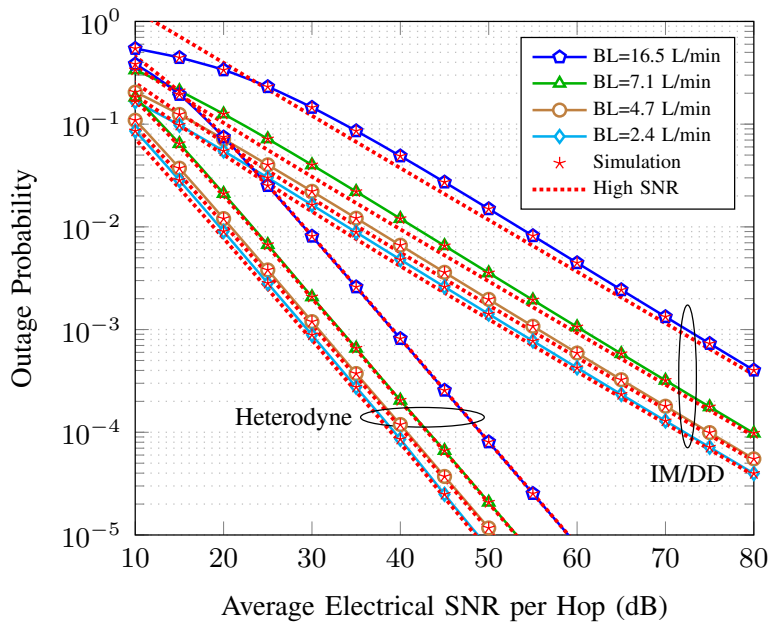
\begin{figure}[!h]
   \begin{center}
\begin{tikzpicture}[scale=1.2]
    \begin{axis}[xmin=10,xmax=80,font=\footnotesize,
      ymin=1e-05,ymax=1, ymode=log, xlabel= Average Electrical SNR per Hop (dB), ylabel= Outage Probability,
       legend style={nodes=right},legend style={nodes={scale=0.7, transform shape}},
    legend pos= north east,
    xminorgrids,
    grid style={dotted},
    yminorgrids,]
 \addplot[smooth,blue,mark=pentagon*,mark options={solid},every mark/.append style={solid, fill=white},thick] plot coordinates {
(0.000000,7.344340e-01)(5.000000,6.534960e-01)(10.000000,5.448840e-01)(15.000000,4.440380e-01)(20.000000,3.369560e-01)(25.000000,2.312290e-01)(30.000000,1.453700e-01)(35.000000,8.597610e-02)(40.000000,4.893840e-02)(45.000000,2.723670e-02)(50.000000,1.497320e-02)(55.000000,8.183190e-03)(60.000000,4.463870e-03)(65.000000,2.436380e-03)(70.000000,1.332750e-03)(75.000000,7.310500e-04)(80.000000,4.019820e-04)
};
 \addplot[smooth,green!70!black,mark=triangle*,mark options={solid},every mark/.append style={solid, fill=white},thick] plot coordinates {
(0.000000,6.497200e-01)(5.000000,4.998840e-01)(10.000000,3.367460e-01)(15.000000,2.110200e-01)(20.000000,1.254530e-01)(25.000000,7.178640e-02)(30.000000,4.006240e-02)(35.000000,2.203020e-02)(40.000000,1.202210e-02)(45.000000,6.541110e-03)(50.000000,3.558940e-03)(55.000000,1.939850e-03)(60.000000,1.060290e-03)(65.000000,5.814290e-04)(70.000000,3.199080e-04)(75.000000,1.765870e-04)(80.000000,9.777600e-05)
};
  \addplot[smooth,brown,mark=*,mark options={solid},every mark/.append style={solid, fill=white},thick] plot coordinates {
(0.000000,9.957820e-01)(5.000000,3.027220e-01)(10.000000,2.047360e-01)(15.000000,1.242840e-01)(20.000000,7.157770e-02)(25.000000,4.014050e-02)(30.000000,2.217070e-02)(35.000000,1.214060e-02)(40.000000,6.621480e-03)(45.000000,3.608210e-03)(50.000000,1.968480e-03)(55.000000,1.076460e-03)(60.000000,5.904210e-04)(65.000000,3.248710e-04)(70.000000,1.793190e-04)(75.000000,9.926950e-05)(80.000000,5.510000e-05)
};
  \addplot[smooth,cyan,mark=diamond*,mark options={solid},every mark/.append style={solid, fill=white},thick] plot coordinates {
(0.000000,9.943300e-01)(5.000000,2.521880e-01)(10.000000,1.652380e-01)(15.000000,9.770440e-02)(20.000000,5.483490e-02)(25.000000,3.008530e-02)(30.000000,1.635220e-02)(35.000000,8.858170e-03)(40.000000,4.798190e-03)(45.000000,2.603910e-03)(50.000000,1.417360e-03)(55.000000,7.742680e-04)(60.000000,4.245530e-04)(65.000000,2.336490e-04)(70.000000,1.290250e-04)(75.000000,7.146820e-05)(80.000000,3.969490e-05)
   };

 \addplot[smooth,blue,mark=pentagon*,mark options={solid},every mark/.append style={solid, fill=white},thick] plot coordinates {
(0.000000,7.441010e-01)(5.000000,5.890720e-01)(10.000000,3.840420e-01)(15.000000,1.942880e-01)(20.000000,7.476300e-02)(25.000000,2.519950e-02)(30.000000,8.095290e-03)(35.000000,2.597250e-03)(40.000000,8.159720e-04)(45.000000,2.562010e-04)(50.000000,8.047700e-05)(55.000000,2.529580e-05)(60.000000,7.956220e-06)(65.000000,2.503910e-06)(70.000000,7.884030e-07)(75.000000,2.483510e-07)(80.000000,7.826070e-08)

};
 \addplot[smooth,green!70!black,mark=triangle*,mark options={solid},every mark/.append style={solid, fill=white},thick] plot coordinates {
 (0.000000,9.976140e-01)(5.000000,4.401860e-01)(10.000000,1.826500e-01)(15.000000,6.457290e-02)(20.000000,2.108630e-02)(25.000000,6.707870e-03)(30.000000,2.077890e-03)(35.000000,6.567230e-04)(40.000000,2.045790e-04)(45.000000,6.636430e-05)(50.000000,2.079170e-05)(55.000000,6.522400e-06)(60.000000,2.048240e-06)(65.000000,6.437640e-07)(70.000000,2.024780e-07)(75.000000,6.372090e-08)(80.000000,2.006320e-08)

};
  \addplot[smooth,brown,mark=*,mark options={solid},every mark/.append style={solid, fill=white},thick] plot coordinates {
(0.000000,9.997580e-01)(5.000000,2.733500e-01)(10.000000,1.091070e-01)(15.000000,3.725650e-02)(20.000000,1.203510e-02)(25.000000,3.805430e-03)(30.000000,1.194820e-03)(35.000000,3.732030e-04)(40.000000,1.189560e-04)(45.000000,3.713930e-05)(50.000000,1.169340e-05)(55.000000,3.534850e-06)(60.000000,1.132660e-06)(65.000000,3.666460e-07)(70.000000,1.188100e-07)(75.000000,3.681580e-08)(80.000000,1.173560e-08)

};
  \addplot[smooth,cyan,mark=diamond*,mark options={solid},every mark/.append style={solid, fill=white},thick] plot coordinates {
(0.000000,9.995120e-01)(5.000000,2.265910e-01)(10.000000,8.572030e-02)(15.000000,2.838630e-02)(20.000000,9.022110e-03)(25.000000,2.823970e-03)(30.000000,8.863630e-04)(35.000000,2.751750e-04)(40.000000,8.563210e-05)(45.000000,2.489640e-05)(50.000000,7.560020e-06)(55.000000,2.897050e-06)(60.000000,8.355450e-07)(65.000000,2.838780e-07)(70.000000,8.890750e-08)(75.000000,2.786550e-08)(80.000000,8.742430e-09)

};

      \addplot[smooth, mark=star,red,only marks] plot coordinates {
(0.000000,7.344340e-01)(5.000000,6.534960e-01)(10.000000,5.448840e-01)(15.000000,4.440380e-01)(20.000000,3.369560e-01)(25.000000,2.312290e-01)(30.000000,1.453700e-01)(35.000000,8.597610e-02)(40.000000,4.893840e-02)(45.000000,2.723670e-02)(50.000000,1.497320e-02)(55.000000,8.183190e-03)(60.000000,4.463870e-03)(65.000000,2.436380e-03)(70.000000,1.332750e-03)(75.000000,7.310500e-04)(80.000000,4.019820e-04)
   };
      \addplot[smooth, mark=star,only marks,red] plot coordinates {
(0.000000,6.497200e-01)(5.000000,4.998840e-01)(10.000000,3.367460e-01)(15.000000,2.110200e-01)(20.000000,1.254530e-01)(25.000000,7.178640e-02)(30.000000,4.006240e-02)(35.000000,2.203020e-02)(40.000000,1.202210e-02)(45.000000,6.541110e-03)(50.000000,3.558940e-03)(55.000000,1.939850e-03)(60.000000,1.060290e-03)(65.000000,5.814290e-04)(70.000000,3.199080e-04)(75.000000,1.765870e-04)(80.000000,9.777600e-05)
};
      \addplot[smooth, mark=star,only marks,red] plot coordinates {
(0.000000,9.957820e-01)(5.000000,3.027220e-01)(10.000000,2.047360e-01)(15.000000,1.242840e-01)(20.000000,7.157770e-02)(25.000000,4.014050e-02)(30.000000,2.217070e-02)(35.000000,1.214060e-02)(40.000000,6.621480e-03)(45.000000,3.608210e-03)(50.000000,1.968480e-03)(55.000000,1.076460e-03)(60.000000,5.904210e-04)(65.000000,3.248710e-04)(70.000000,1.793190e-04)(75.000000,9.926950e-05)(80.000000,5.510000e-05)

};
      \addplot[smooth, mark=star,only marks,red] plot coordinates {
(0.000000,9.943300e-01)(5.000000,2.521880e-01)(10.000000,1.652380e-01)(15.000000,9.770440e-02)(20.000000,5.483490e-02)(25.000000,3.008530e-02)(30.000000,1.635220e-02)(35.000000,8.858170e-03)(40.000000,4.798190e-03)(45.000000,2.603910e-03)(50.000000,1.417360e-03)(55.000000,7.742680e-04)(60.000000,4.245530e-04)(65.000000,2.336490e-04)(70.000000,1.290250e-04)(75.000000,7.146820e-05)(80.000000,3.969490e-05)
};

      \addplot[smooth, mark=star,red,only marks] plot coordinates {
(0.000000,7.441010e-01)(5.000000,5.890720e-01)(10.000000,3.840420e-01)(15.000000,1.942880e-01)(20.000000,7.476300e-02)(25.000000,2.519950e-02)(30.000000,8.095290e-03)(35.000000,2.597250e-03)(40.000000,8.159720e-04)(45.000000,2.562010e-04)(50.000000,8.047700e-05)(55.000000,2.529580e-05)(60.000000,7.956220e-06)(65.000000,2.503910e-06)(70.000000,7.884030e-07)(75.000000,2.483510e-07)(80.000000,7.826070e-08)

   };
      \addplot[smooth, mark=star,only marks,red] plot coordinates {
(0.000000,9.976140e-01)(5.000000,4.401860e-01)(10.000000,1.826500e-01)(15.000000,6.457290e-02)(20.000000,2.108630e-02)(25.000000,6.707870e-03)(30.000000,2.077890e-03)(35.000000,6.567230e-04)(40.000000,2.045790e-04)(45.000000,6.636430e-05)(50.000000,2.079170e-05)(55.000000,6.522400e-06)(60.000000,2.048240e-06)(65.000000,6.437640e-07)(70.000000,2.024780e-07)(75.000000,6.372090e-08)(80.000000,2.006320e-08)

};
      \addplot[smooth, mark=star,only marks,red] plot coordinates {
(0.000000,9.997580e-01)(5.000000,2.733500e-01)(10.000000,1.091070e-01)(15.000000,3.725650e-02)(20.000000,1.203510e-02)(25.000000,3.805430e-03)(30.000000,1.194820e-03)(35.000000,3.732030e-04)(40.000000,1.189560e-04)(45.000000,3.713930e-05)(50.000000,1.169340e-05)(55.000000,3.534850e-06)(60.000000,1.132660e-06)(65.000000,3.666460e-07)(70.000000,1.188100e-07)(75.000000,3.681580e-08)(80.000000,1.173560e-08)

   };
      \addplot[smooth, mark=star,only marks,red] plot coordinates {
(0.000000,9.995120e-01)(5.000000,2.265910e-01)(10.000000,8.572030e-02)(15.000000,2.838630e-02)(20.000000,9.022110e-03)(25.000000,2.823970e-03)(30.000000,8.863630e-04)(35.000000,2.751750e-04)(40.000000,8.563210e-05)(45.000000,2.489640e-05)(50.000000,7.560020e-06)(55.000000,2.897050e-06)(60.000000,8.355450e-07)(65.000000,2.838780e-07)(70.000000,8.890750e-08)(75.000000,2.786550e-08)(80.000000,8.742430e-09)

   };

\addplot[smooth,red,densely dotted, very thick] plot coordinates {
(0.000000,2.506190e+00)(5.000000,2.082550e+00)(10.000000,1.268440e+00)(15.000000,7.151030e-01)(20.000000,3.954910e-01)(25.000000,2.183350e-01)(30.000000,1.208620e-01)(35.000000,6.712460e-02)(40.000000,3.738540e-02)(45.000000,2.086840e-02)(50.000000,1.166860e-02)(55.000000,6.533100e-03)(60.000000,3.661460e-03)(65.000000,2.053640e-03)(70.000000,1.152540e-03)(75.000000,6.471100e-04)(80.000000,3.634590e-04)
};
  \addplot[smooth,red,densely dotted, very thick] plot coordinates {
(0.000000,1.416590e+00)(5.000000,7.069430e-01)(10.000000,3.635370e-01)(15.000000,1.915360e-01)(20.000000,1.028730e-01)(25.000000,5.601860e-02)(30.000000,3.079380e-02)(35.000000,1.703700e-02)(40.000000,9.468040e-03)(45.000000,5.278290e-03)(50.000000,2.949220e-03)(55.000000,1.650580e-03)(60.000000,9.248880e-04)(65.000000,5.187200e-04)(70.000000,2.911160e-04)(75.000000,1.634620e-04)(80.000000,9.181870e-05)
};
\addplot[smooth,red,densely dotted, very thick] plot coordinates {
(0.000000,1.045230e+00)(5.000000,5.009890e-01)(10.000000,2.520270e-01)(15.000000,1.284940e-01)(20.000000,6.664920e-02)(25.000000,3.521660e-02)(30.000000,1.890400e-02)(35.000000,1.027090e-02)(40.000000,5.630440e-03)(45.000000,3.106830e-03)(50.000000,1.722580e-03)(55.000000,9.584920e-04)(60.000000,5.347450e-04)(65.000000,2.989260e-04)(70.000000,1.673490e-04)(75.000000,9.379230e-05)(80.000000,5.261060e-05)

   };
     \addplot[smooth,red,densely dotted,very thick] plot coordinates {
(0.000000,7.747930e-01)(5.000000,3.773640e-01)(10.000000,1.910190e-01)(15.000000,9.754880e-02)(20.000000,5.039110e-02)(25.000000,2.644280e-02)(30.000000,1.409170e-02)(35.000000,7.606930e-03)(40.000000,4.147760e-03)(45.000000,2.278910e-03)(50.000000,1.259320e-03)(55.000000,6.989070e-04)(60.000000,3.891490e-04)(65.000000,2.172090e-04)(70.000000,1.214620e-04)(75.000000,6.801540e-05)(80.000000,3.812670e-05)
 };

\addplot[smooth,red,densely dotted, very thick] plot coordinates {
(0.000000,2.755710e-01)(5.000000,9.461780e-01)(10.000000,4.570880e-01)(15.000000,2.055890e-01)(20.000000,7.663420e-02)(25.000000,2.556240e-02)(30.000000,8.176210e-03)(35.000000,2.617500e-03)(40.000000,8.211810e-04)(45.000000,2.575880e-04)(50.000000,8.085480e-05)(55.000000,2.540060e-05)(60.000000,7.985710e-06)(65.000000,2.512300e-06)(70.000000,7.908170e-07)(75.000000,2.490510e-07)(80.000000,7.846530e-08)

};
  \addplot[smooth,red,densely dotted, very thick] plot coordinates {
(0.000000,1.373760e+00)(5.000000,4.439520e-01)(10.000000,1.712580e-01)(15.000000,6.098710e-02)(20.000000,2.014750e-02)(25.000000,6.464500e-03)(30.000000,2.016010e-03)(35.000000,6.398780e-04)(40.000000,2.000350e-04)(45.000000,6.501570e-05)(50.000000,2.041320e-05)(55.000000,6.414860e-06)(60.000000,2.017370e-06)(65.000000,6.348250e-07)(70.000000,1.998710e-07)(75.000000,6.295600e-08)(80.000000,1.983760e-08)

};
\addplot[smooth,red,densely dotted, very thick] plot coordinates {
(0.000000,1.163470e+00)(5.000000,2.974950e-01)(10.000000,9.607780e-02)(15.000000,3.245810e-02)(20.000000,1.070400e-02)(25.000000,3.452090e-03)(30.000000,1.100440e-03)(35.000000,3.477510e-04)(40.000000,1.116930e-04)(45.000000,3.512270e-05)(50.000000,1.111810e-05)(55.000000,3.382060e-06)(60.000000,1.086790e-06)(65.000000,3.525220e-07)(70.000000,1.144350e-07)(75.000000,3.556580e-08)(80.000000,1.135730e-08)

};
\addplot[smooth,red,densely dotted, very thick] plot coordinates {
(0.000000,8.865710e-01)(5.000000,2.287440e-01)(10.000000,7.217960e-02)(15.000000,2.377170e-02)(20.000000,7.734960e-03)(25.000000,2.477740e-03)(30.000000,7.915620e-04)(35.000000,2.494040e-04)(40.000000,7.850710e-05)(45.000000,2.316030e-05)(50.000000,7.096220e-06)(55.000000,2.699210e-06)(60.000000,7.865370e-07)(65.000000,2.672910e-07)(70.000000,8.406660e-08)(75.000000,2.644530e-08)(80.000000,8.323590e-09)

};

\draw \boundellipse{ axis cs:72.5,3.5e-004}{10}{1.6};
\node at (axis cs:72.5,4e-005){\scriptsize{IM/DD}};

\draw \boundellipse{ axis cs:42.5,1.4e-004}{70}{0.22};
\node at (axis cs:28,1.35e-004){\scriptsize{Heterodyne}};

\legend{BL=16.5 L/min, BL=7.1 L/min, BL=4.7 L/min, BL=2.4 L/min,,,,,Simulation,,,,,,,,High SNR};
      \end{axis}
    \end{tikzpicture}
   \caption{Outage Probability under various levels of air bubbles using salty water for thermally uniform UWOC channels under both IM/DD and heterodyne techniques along with the asymptotic results at high SNR for $\gamma_{\rm{th}}=0$ dB.}
      \label{fig:OP2new}
          \end{center}
\end{figure}

Fig.~\ref{fig:OP2new} presents the outage probability for the dual-hop UWOC system under uniform temperature, various levels of air bubbles, and for both IM/DD and heterodyne techniques using salty water. Expectedly, it can be inferred from Fig.~\ref{fig:OP2new} that for a given type of detection, $P_{\rm{out}}$ increases as the severity of the turbulence increases (i.e. the higher the level of air bubbles, the higher will be the outage probability of the dual-hop UWOC system), leading to a performance deterioration.
In addition, it can also be observed that implementing heterodyne detection results in a significant improvement in the UWOC system performance compared to IM/DD, as expected.
This performance
enhancement is due the fact that heterodyne technique can better
overcome the turbulence effects which comes at the
expense of complexity in implementing coherent receivers relative
to the IM/DD technique \cite{heterodyne1}. For example, for a bubbles level BL=4.7 L/min, to achieve an outage probability of $10^{-3}$, an SNR of 30 dB is required for the heterodyne detection technique while this increases to 55 dB in the case of the IM/DD technique.
\begin{figure}[!h]
   \begin{center}
\begin{tikzpicture}[scale=1.2]
    \begin{axis}[xtick=data,xmin=0,xmax=50,font=\footnotesize,
      ymin=1e-04,ymax=0.5, ymode=log, xlabel=Average Electrical SNR per Hop (dB), ylabel=Average Bit Error Rate,
       legend style={nodes=right},legend style={nodes={scale=0.8, transform shape}},
    legend pos= north east,
    xtick=data,
    xminorgrids,
    grid style={dotted},
    yminorgrids,]
 \addplot[smooth,blue,mark=*,mark options={solid},every mark/.append style={solid, fill=white},thick] plot coordinates {
(0.000000,3.138004e-01)(5.000000,2.053768e-01)(10.000000,1.331135e-01)(15.000000,8.315779e-02)(20.000000,4.953466e-02)(25.000000,2.842693e-02)(30.000000,1.588473e-02)(35.000000,8.744773e-03)(40.000000,4.786489e-03)(45.000000,2.634569e-03)(50.000000,1.456697e-03)
   };
  \addplot[smooth,blue,mark=*,mark options={solid},every mark/.append style={solid, fill=white},thick] plot coordinates {
(0.000000,2.944055e-01)(5.000000,1.548841e-01)(10.000000,7.617947e-02)(15.000000,4.503048e-02)(20.000000,2.566385e-02)(25.000000,1.426879e-02)(30.000000,7.805713e-03)(35.000000,4.241906e-03)(40.000000,2.295548e-03)(45.000000,1.242179e-03)(50.000000,6.741055e-04)
   };
  \addplot[smooth,blue,mark=*,mark options={solid},every mark/.append style={solid, fill=white},thick] plot coordinates {
(0.000000,2.818800e-01)(5.000000,1.213835e-01)(10.000000,1.462904e-02)(15.000000,4.215823e-05)(20.000000,1.539240e-12)(25.000000,5.323861e-34)(30.000000,1.243342e-93)(35.000000,8.962400e-279)(40.000000,0.000000e+00)(45.000000,0.000000e+00)(50.000000,0.000000e+00)
   };

      \addplot[smooth, mark=star,only marks,red] plot coordinates {
(0.000000,3.138004e-01)(5.000000,2.053768e-01)(10.000000,1.331135e-01)(15.000000,8.315779e-02)(20.000000,4.953466e-02)(25.000000,2.842693e-02)(30.000000,1.588473e-02)(35.000000,8.744773e-03)(40.000000,4.786489e-03)(45.000000,2.634569e-03)(50.000000,1.456697e-03)
};
      \addplot[smooth, mark=star,only marks,red] plot coordinates {
(0.000000,2.944055e-01)(5.000000,1.548841e-01)(10.000000,7.617947e-02)(15.000000,4.503048e-02)(20.000000,2.566385e-02)(25.000000,1.426879e-02)(30.000000,7.805713e-03)(35.000000,4.241906e-03)(40.000000,2.295548e-03)(45.000000,1.242179e-03)(50.000000,6.741055e-04)
   };
      \addplot[smooth, mark=star,only marks,red] plot coordinates {
(0.000000,2.818800e-01)(5.000000,1.213835e-01)(10.000000,1.462904e-02)(15.000000,4.215823e-05)(20.000000,1.539240e-12)(25.000000,5.323861e-34)(30.000000,1.243342e-93)(35.000000,8.962400e-279)(40.000000,0.000000e+00)(45.000000,0.000000e+00)(50.000000,0.000000e+00)
   };
  \addplot[smooth,tension=0.01,red, densely dotted ,every mark/.append style={solid, fill=white},very thick] plot coordinates {
(0.000000,4.539220e+14)(5.000000,3.455100e+05)(10.000000,1.730490e-01)(15.000000,8.906070e-02)(20.000000,4.766990e-02)(25.000000,2.589720e-02)(30.000000,1.421340e-02)(35.000000,7.855300e-03)(40.000000,4.362230e-03)(45.000000,2.430610e-03)(50.000000,1.357590e-03)
};
  \addplot[smooth,tension=0.01,red, densely dotted, every mark/.append style={solid, fill=white},very thick] plot coordinates {
(0.000000,2.068310e+37)(5.000000,1.394720e+18)(10.000000,1.242150e+05)(15.000000,4.855650e-02)(20.000000,2.515190e-02)(25.000000,1.325610e-02)(30.000000,7.096200e-03)(35.000000,3.845960e-03)(40.000000,2.103970e-03)(45.000000,1.159030e-03)(50.000000,6.417910e-04)
   };

      \addplot[smooth,densely dashed,green!70!black,thick] plot coordinates {
(0.000000,1.928699e-01)(5.000000,1.180514e-01)(10.000000,8.067149e-02)(15.000000,5.252324e-02)(20.000000,3.231265e-02)(25.000000,1.917041e-02)(30.000000,1.112362e-02)(35.000000,6.354998e-03)(40.000000,3.595642e-03)(45.000000,2.019004e-03)(50.000000,1.127615e-03)
};
    \addplot[smooth,densely dashed,green!70!black,thick] plot coordinates {
(0.000000,1.718820e-01)(5.000000,7.278245e-02)(10.000000,3.988382e-02)(15.000000,2.598836e-02)(20.000000,1.603425e-02)(25.000000,9.522190e-03)(30.000000,5.522454e-03)(35.000000,3.154677e-03)(40.000000,1.781856e-03)(45.000000,1.002054e-03)(50.000000,5.664769e-04)
   };
    \addplot[smooth,densely dashed,green!70!black,thick] plot coordinates {
(0.000000,1.586989e-01)(5.000000,3.776231e-02)(10.000000,7.982275e-04)(15.000000,1.129764e-08)(20.000000,5.091466e-23)(25.000000,1.086020e-63)(30.000000,1.910259e-185)(35.000000,0.000000e+00)(40.000000,0.000000e+00)(45.000000,0.000000e+00)(50.000000,0.000000e+00)
   };

\draw \boundellipse{ axis cs:37.5,6e-003}{8}{0.4};
\node at (axis cs:43,1.15e-002){\scriptsize{BL$=$7.1L/min}};

\draw \boundellipse{ axis cs:42.5,1.5e-003}{8}{0.4};
\node at (axis cs:35,1.2e-003){\scriptsize{BL$=$4.7L/min}};

\draw \boundellipse{ axis cs:11,1.2e-003}{28}{0.13};
\node at (axis cs:19.5,1.2e-003){\scriptsize{BL$=$0L/min}};
\legend{Dual-Hop Analytic,,,Dual-Hop Simulation,,,Dual-Hop Asymptotic,,,Single Link Simulation};
      \end{axis}
    \end{tikzpicture}
   \caption{Average BER for OOK of single UWOC and dual-hop UWOC links under various levels of air bubbles using fresh water for thermally uniform UWOC channels operating using IM/DD technique along with the asymptotic results at high SNR.}
      \label{fig:BER1new}
          \end{center}
\end{figure}
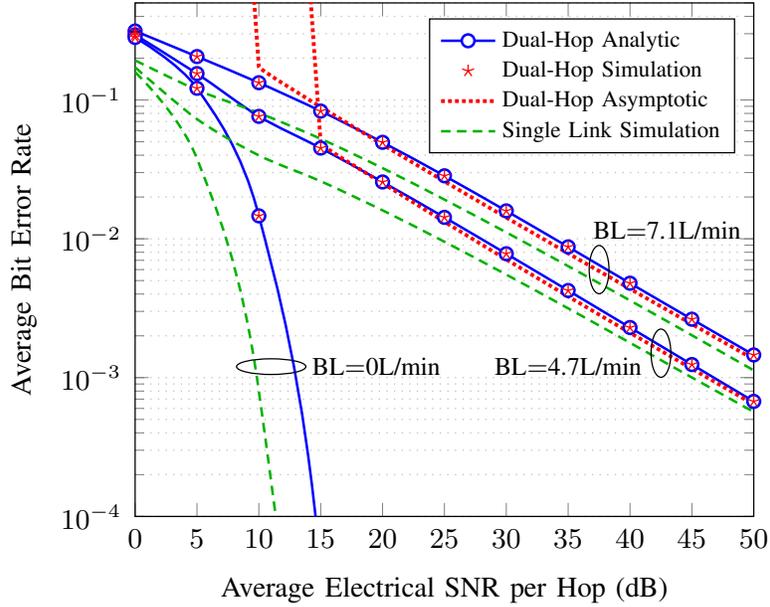

The average BER performance of the dual-hop system with the end-to-end link length of 2 m in operation under the IM/DD technique (i.e. $r_1=2$ and $r_2=2$) for different levels of air bubbles is illustrated in Fig.~\ref{fig:BER1new} in the case
of uniform temperature and fresh water.
The asymptotic results of the average BER of the dual-hop UWOC system at high SNR values obtained by utilizing (\ref{BERHighSNR}) are also shown in Fig.~\ref{fig:BER1new}. Simulation results for a 1 m single UWOC link under the same channel conditions are also included for comparison purposes.
We can see from this figure that the analytical results of the average BER are in a good match with the Monte-Carlo simulated results. Moreover, it can be observed that for both dual-hop and single UWOC links, the average BER performance degrades as the level of air bubbles increases, resulting in a severe turbulence condition.
One of the most important outcomes of Fig.~\ref{fig:BER1new} is that
the dual-hop UWOC system,
 where each hop has the length of 1 m, offers less BER performance for all channel conditions,
 as compared with the single UWOC link with a total length of 1 m. This is due to the fact that the degrading effects of scattering, absorption and turbulence-induced fading
increase with the distance. Hence, this result emphasizes the effectiveness
  of the dual-hop system
in mitigating the short range problem in UWOC with low power requirements and the impairment effects in the underwater medium.
Furthermore, it can be clearly seen from Fig.~\ref{fig:BER1new} that the asymptotic result of the average BER given by (\ref{BERHighSNR}) matches perfectly the analytical closed-form result in (\ref{SNRBER}) at high SNR regime, proving the accuracy of our asymptotic analysis.
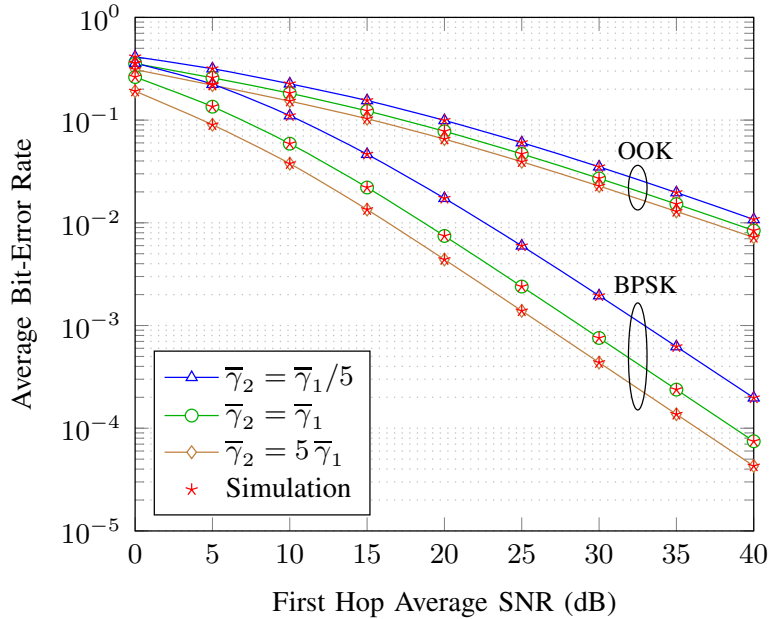
\begin{figure}[!h]
  \begin{center}
\begin{tikzpicture}[scale=1.2]
    \begin{axis}[xtick=data,font=\footnotesize,
    ymode=log, xlabel= First Hop Average SNR (dB), ylabel= Average Bit-Error Rate,
  xmin=0,xmax=40,ymin=1e-05, ymax=1,
    legend style={nodes=right},legend pos= south west,
    xtick=data,
      xminorgrids,
    grid style={dotted},
    yminorgrids,
   ]
     \addplot[smooth,blue,mark=triangle*,mark options={solid},every mark/.append style={solid, fill=white}] plot coordinates {
(0.000000,4.129136e-01)(5.000000,3.165488e-01)(10.000000,2.255993e-01)(15.000000,1.550521e-01)(20.000000,9.932274e-02)(25.000000,6.036590e-02)(30.000000,3.508231e-02)(35.000000,1.969392e-02)(40.000000,1.078043e-02)(45.000000,5.812003e-03)(50.000000,3.111770e-03)
};
    \addplot[smooth,
green!70!black,mark=*,mark options={solid},every mark/.append style={solid, fill=white}] plot coordinates {
(0.000000,3.552589e-01)(5.000000,2.580412e-01)(10.000000,1.826408e-01)(15.000000,1.228161e-01)(20.000000,7.785611e-02)(25.000000,4.683704e-02)(30.000000,2.707286e-02)(35.000000,1.522483e-02)(40.000000,8.407446e-03)(45.000000,4.604491e-03)(50.000000,2.508514e-03)
};
    \addplot[smooth,brown,mark=diamond*,mark options={solid},every mark/.append style={solid, fill=white}] plot coordinates {
(0.000000,3.104644e-01)(5.000000,2.188530e-01)(10.000000,1.530180e-01)(15.000000,1.030430e-01)(20.000000,6.531077e-02)(25.000000,3.930062e-02)(30.000000,2.280067e-02)(35.000000,1.290663e-02)(40.000000,7.214023e-03)(45.000000,4.014663e-03)(50.000000,2.231988e-03)
};
     \addplot[smooth,blue,mark=triangle*,mark options={solid},every mark/.append style={solid, fill=white}] plot coordinates {
(0.000000,3.587177e-01)(5.000000,2.233306e-01)(10.000000,1.104814e-01)(15.000000,4.648950e-02)(20.000000,1.732339e-02)(25.000000,5.966500e-03)(30.000000,1.953470e-03)(35.000000,6.230747e-04)(40.000000,1.978188e-04)(45.000000,6.328391e-05)(50.000000,2.179716e-05)
};
    \addplot[smooth,green!70!black,mark=*,mark options={solid},every mark/.append style={solid, fill=white}] plot coordinates {
(0.000000,2.626830e-01)(5.000000,1.351020e-01)(10.000000,5.892620e-02)(15.000000,2.207820e-02)(20.000000,7.450820e-03)(25.000000,2.395300e-03)(30.000000,7.566370e-04)(35.000000,2.377680e-04)(40.000000,7.464430e-05)(45.000000,2.344220e-05)(50.000000,1.269150e+01)
};
    \addplot[smooth,brown,mark=diamond*,mark options={solid},every mark/.append style={solid, fill=white}] plot coordinates {
(0.000000,1.915370e-01)(5.000000,9.008770e-02)(10.000000,3.769550e-02)(15.000000,1.345750e-02)(20.000000,4.393030e-03)(25.000000,1.389620e-03)(30.000000,4.360270e-04)(35.000000,1.367010e-04)(40.000000,4.289790e-05)(45.000000,1.348060e-05)(50.000000,1.269180e+01)
};
\addplot[smooth, mark=star,only marks,red] plot coordinates {
(0.000000,4.129136e-01)(5.000000,3.165488e-01)(10.000000,2.255993e-01)(15.000000,1.550521e-01)(20.000000,9.932274e-02)(25.000000,6.036590e-02)(30.000000,3.508231e-02)(35.000000,1.969392e-02)(40.000000,1.078043e-02)(45.000000,5.812003e-03)(50.000000,3.111770e-03)
};
\addplot[smooth, mark=star,only marks,red] plot coordinates {
(0.000000,3.552589e-01)(5.000000,2.580412e-01)(10.000000,1.826408e-01)(15.000000,1.228161e-01)(20.000000,7.785611e-02)(25.000000,4.683704e-02)(30.000000,2.707286e-02)(35.000000,1.522483e-02)(40.000000,8.407446e-03)(45.000000,4.604491e-03)(50.000000,2.508514e-03)
};
\addplot[smooth, mark=star,only marks,red] plot coordinates {
(0.000000,3.104644e-01)(5.000000,2.188530e-01)(10.000000,1.530180e-01)(15.000000,1.030430e-01)(20.000000,6.531077e-02)(25.000000,3.930062e-02)(30.000000,2.280067e-02)(35.000000,1.290663e-02)(40.000000,7.214023e-03)(45.000000,4.014663e-03)(50.000000,2.231988e-03)
};
\addplot[smooth, mark=star,only marks,red] plot coordinates {
(0.000000,3.587177e-01)(5.000000,2.233306e-01)(10.000000,1.104814e-01)(15.000000,4.648950e-02)(20.000000,1.732339e-02)(25.000000,5.966500e-03)(30.000000,1.953470e-03)(35.000000,6.230747e-04)(40.000000,1.978188e-04)(45.000000,6.328391e-05)(50.000000,2.179716e-05)
};
\addplot[smooth, mark=star,only marks,red] plot coordinates {
(0.000000,2.626830e-01)(5.000000,1.351020e-01)(10.000000,5.892620e-02)(15.000000,2.207820e-02)(20.000000,7.450820e-03)(25.000000,2.395300e-03)(30.000000,7.566370e-04)(35.000000,2.377680e-04)(40.000000,7.464430e-05)(45.000000,2.344220e-05)(50.000000,1.269150e+01)
};
\addplot[smooth, mark=star,only marks,red] plot coordinates {
(0.000000,1.915370e-01)(5.000000,9.008770e-02)(10.000000,3.769550e-02)(15.000000,1.345750e-02)(20.000000,4.393030e-03)(25.000000,1.389620e-03)(30.000000,4.360270e-04)(35.000000,1.367010e-04)(40.000000,4.289790e-05)(45.000000,1.348060e-05)(50.000000,1.269180e+01)
};

\legend{$\overline{\gamma}_2=\overline{\gamma}_1/5$, $\overline{\gamma}_2=\overline{\gamma}_1$,$\overline{\gamma}_2=5 \,\overline{\gamma}_1$,,,,,Simulation};

 \draw \boundellipse{ axis cs:32.5,2.2e-002}{6}{0.5};
\node at (axis cs:33,5e-002){\scriptsize{OOK}};

\draw \boundellipse{ axis cs:32.5,5e-004}{6}{1.2};
\node at (axis cs:33,2.5e-003){\scriptsize{BPSK}};

 \end{axis}
  \end{tikzpicture}
     \caption{Average BER for OOK and BPSK of dual-hop UWOC systems with balanced or unbalanced hops for a bubbles level of 4.7 L/min and a temperature gradient of 0.05 $^\circ$C.$cm^{-1}$.}
          \label{fig:BERnew}
     \end{center}
  \end{figure}

Fig.~\ref{fig:BERnew} illustrates the effect of the power imbalance between the two hops on the BER performance under both IM/DD with OOK and heterodyne detection with BPSK modulation schemes.
We can observe that the imbalance between the hops can be either beneficial or deleterious for the overall system performance. More specifically, it can be seen from Fig.~\ref{fig:BERnew} that there is an improvement in the average BER performance when $\overline{\gamma}_2 > \overline{\gamma}_1$ which is greater in the case of heterodyne technique, and there is a degradation otherwise. In addition, it can be shown that BPSK modulation always performs better than OOK for all SNR range, as expected.

Fig.~\ref{fig:BER2new} depicts the average BER performance of the dual-hop UWOC system operating under the heterodyne detection technique (i.e. $r_1=1$ and $r_2=1$) for 64-QAM, 16-PSK, 16 QAM, and BPSK modulation schemes in the case of strong turbulence conditions corresponding to a level of bubbles equal to 4.7 L/min and a temperature gradient of 0.10 $^\circ$C.$cm^{-1}$, with a scintillation index $\sigma_I^2=0.4769$. Clearly, it can be observed from Fig.~\ref{fig:BER2new} that BPSK modulation offers the best performance compared to the presented modulation techniques.
In addition, it can be seen from Fig.~\ref{fig:BER2new} that 16-QAM outperforms 16-PSK, as expected when $M > 4$ \cite{proakis2008digital}. Finally, it can also be noticed that the derived asymptotic results at high SNR range are very tight and accurate.
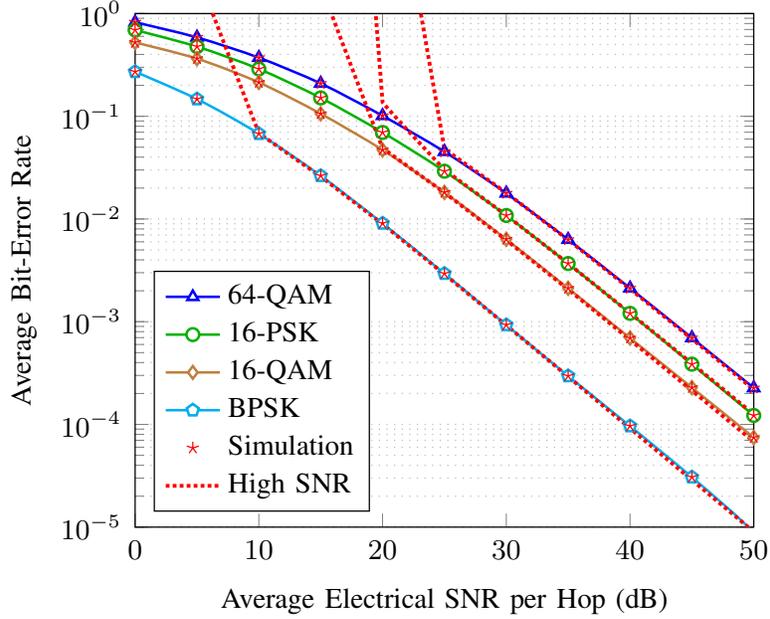
\begin{figure}[!h]
  \begin{center}
\begin{tikzpicture}[scale=1.2]
    \begin{axis}[font=\footnotesize,
    ymode=log, xlabel= Average Electrical SNR per Hop (dB), ylabel= Average Bit-Error Rate,
  xmin=0,xmax=50,ymin=1e-05, ymax=1,
    legend style={nodes=right},legend pos= south west,
      xminorgrids,
    grid style={dotted},
    yminorgrids,
   ]

     \addplot[smooth,blue,mark=triangle*,mark options={solid},every mark/.append style={solid, fill=white},thick] plot coordinates {
(0.000000,8.236960e-01)(5.000000,5.863756e-01)(10.000000,3.721336e-01)(15.000000,2.086872e-01)(20.000000,1.011961e-01)(25.000000,4.518831e-02)(30.000000,1.779617e-02)(35.000000,6.303914e-03)(40.000000,2.117778e-03)(45.000000,6.955396e-04)(50.000000,2.263889e-04)(55.000000,7.250874e-05)(60.000000,2.303945e-05)
};
    \addplot[smooth,
green!70!black,mark=*,mark options={solid},every mark/.append style={solid, fill=white},thick] plot coordinates {
(0.000000,6.921009e-01)(5.000000,4.766898e-01)(10.000000,2.887950e-01)(15.000000,1.509619e-01)(20.000000,6.958607e-02)(25.000000,2.919501e-02)(30.000000,1.081610e-02)(35.000000,3.685825e-03)(40.000000,1.205673e-03)(45.000000,3.860925e-04)(50.000000,1.221989e-04)(55.000000,3.927081e-05)(60.000000,1.262337e-05)
};
    \addplot[smooth,brown,mark=diamond*,mark options={solid},every mark/.append style={solid, fill=white},thick] plot coordinates {
(0.000000,5.230832e-01)(5.000000,3.641377e-01)(10.000000,2.137156e-01)(15.000000,1.058051e-01)(20.000000,4.699502e-02)(25.000000,1.814783e-02)(30.000000,6.341729e-03)(35.000000,2.114282e-03)(40.000000,6.935538e-04)(45.000000,2.282153e-04)(50.000000,7.448512e-05)(55.000000,2.376221e-05)(60.000000,7.207434e-06)
};
    \addplot[smooth,cyan,mark=pentagon*,mark options={solid},every mark/.append style={solid, fill=white},thick] plot coordinates {
(0.000000,2.716132e-01)(5.000000,1.471391e-01)(10.000000,6.768528e-02)(15.000000,2.636655e-02)(20.000000,9.072619e-03)(25.000000,2.941158e-03)(30.000000,9.300510e-04)(35.000000,2.963021e-04)(40.000000,9.638922e-05)(45.000000,3.066936e-05)(50.000000,8.857550e-06)(55.000000,1.721098e-06)(60.000000,1.137702e-07)
};
\addplot[smooth, mark=star,only marks,red] plot coordinates {
(0.000000,8.236960e-01)(5.000000,5.863756e-01)(10.000000,3.721336e-01)(15.000000,2.086872e-01)(20.000000,1.011961e-01)(25.000000,4.518831e-02)(30.000000,1.779617e-02)(35.000000,6.303914e-03)(40.000000,2.117778e-03)(45.000000,6.955396e-04)(50.000000,2.263889e-04)(55.000000,7.250874e-05)(60.000000,2.303945e-05)
};
   \addplot[smooth, mark=star,only marks,red] plot coordinates {
(0.000000,6.921009e-01)(5.000000,4.766898e-01)(10.000000,2.887950e-01)(15.000000,1.509619e-01)(20.000000,6.958607e-02)(25.000000,2.919501e-02)(30.000000,1.081610e-02)(35.000000,3.685825e-03)(40.000000,1.205673e-03)(45.000000,3.860925e-04)(50.000000,1.221989e-04)(55.000000,3.927081e-05)(60.000000,1.262337e-05)
};
   \addplot[smooth, mark=star,only marks,red] plot coordinates {
(0.000000,5.230832e-01)(5.000000,3.641377e-01)(10.000000,2.137156e-01)(15.000000,1.058051e-01)(20.000000,4.699502e-02)(25.000000,1.814783e-02)(30.000000,6.341729e-03)(35.000000,2.114282e-03)(40.000000,6.935538e-04)(45.000000,2.282153e-04)(50.000000,7.448512e-05)(55.000000,2.376221e-05)(60.000000,7.207434e-06)
};
   \addplot[smooth, mark=star,only marks,red] plot coordinates {
(0.000000,2.716132e-01)(5.000000,1.471391e-01)(10.000000,6.768528e-02)(15.000000,2.636655e-02)(20.000000,9.072619e-03)(25.000000,2.941158e-03)(30.000000,9.300510e-04)(35.000000,2.963021e-04)(40.000000,9.638922e-05)(45.000000,3.066936e-05)(50.000000,8.857550e-06)(55.000000,1.721098e-06)(60.000000,1.137702e-07)
};
     \addplot[smooth,tension=0.01,red,densely dotted, very thick] plot coordinates {
(0.000000,3.933640e+35)(5.000000,3.796580e+26)(10.000000,2.731360e+18)(15.000000,1.965140e+10)(20.000000,1.415110e+02)(25.000000,4.858410e-02)(30.000000,1.799400e-02)(35.000000,6.265350e-03)(40.000000,2.083860e-03)(45.000000,6.800970e-04)(50.000000,2.185650e-04)(55.000000,7.018020e-05)(60.000000,2.267550e-05)
};
    \addplot[smooth,tension=0.01,red,densely dotted,very thick] plot coordinates {
(0.000000,1.622020e+32)(5.000000,1.565510e+23)(10.000000,1.126270e+15)(15.000000,8.103170e+06)(20.000000,1.376140e-01)(25.000000,3.022110e-02)(30.000000,1.085210e-02)(35.000000,3.692440e-03)(40.000000,1.211370e-03)(45.000000,3.922600e-04)(50.000000,1.254080e-04)(55.000000,4.012780e-05)(60.000000,1.293500e-05)
};
   \addplot[smooth,tension=0.01,red,densely dotted,very  thick] plot coordinates {
(0.000000,3.578640e+25)(5.000000,3.453940e+16)(10.000000,2.484860e+08)(15.000000,1.912870e+00)(20.000000,4.898310e-02)(25.000000,1.807670e-02)(30.000000,6.229990e-03)(35.000000,2.063320e-03)(40.000000,6.662530e-04)(45.000000,2.137330e-04)(50.000000,6.789120e-05)(55.000000,2.162710e-05)(60.000000,6.950050e-06)
};
 \addplot[smooth,tension=0.01,red,densely dotted,very thick] plot coordinates {
(0.000000,2.469920e+09)(5.000000,2.559520e+00)(10.000000,6.855910e-02)(15.000000,2.591020e-02)(20.000000,8.912560e-03)(25.000000,2.899350e-03)(30.000000,9.226560e-04)(35.000000,2.918170e-04)(40.000000,9.150890e-05)(45.000000,2.879580e-05)(50.000000,9.013990e-06)(55.000000,2.841040e-06)(60.000000,9.061330e-07)
};
\legend{64-QAM, 16-PSK,16-QAM,BPSK,Simulation,,,,,High SNR};
%
%
\end{axis}
  \end{tikzpicture}
     \caption{Average BER for different modulation schemes of dual-hop UWOC systems operating under heterodyne detection along with the asymptotic results at high SNR for a bubbles level of 4.7 L/min and a temperature gradient of 0.10 $^\circ$C.$cm^{-1}$.}
          \label{fig:BER2new}
     \end{center}
  \end{figure}
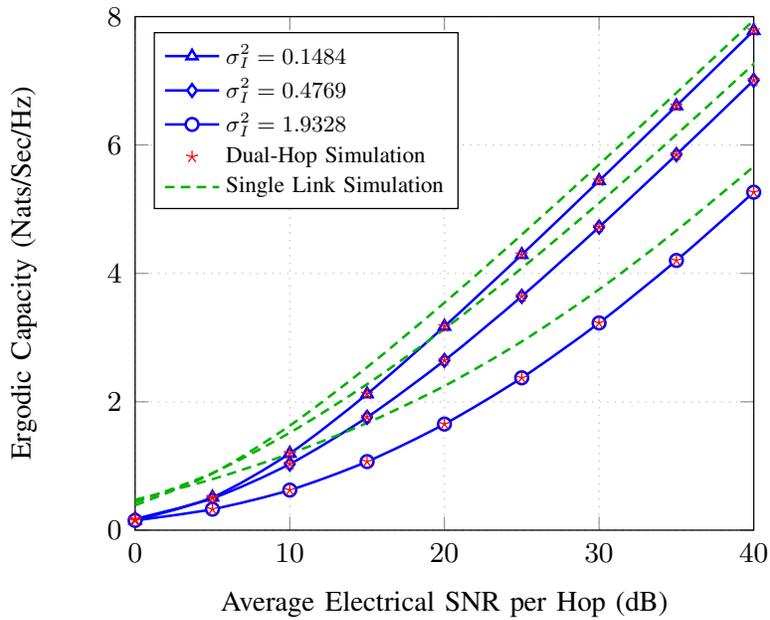
\begin{figure}[!h]
   \begin{center}
\begin{tikzpicture}[scale=1.2]
    \begin{axis}[font=\footnotesize,xmin=0,xmax=40,
   ymin=0,ymax=8, xlabel=Average Electrical SNR per Hop (dB), ylabel= Ergodic Capacity (Nats/Sec/Hz),
       legend style={nodes=right},
    legend pos= north west,legend style={nodes={scale=0.8, transform shape}},
    xmajorgrids,
    grid style={dotted},
    ymajorgrids,]

    \addplot[smooth,blue,mark=triangle*,mark options={solid},every mark/.append style={solid, fill=white},thick] plot coordinates {
(0.000000,1.531799e-01)(5.000000,5.119999e-01)(10.000000,1.195531e+00)(15.000000,2.120716e+00)(20.000000,3.173824e+00)(25.000000,4.291602e+00)(30.000000,5.442328e+00)(35.000000,6.608918e+00)(40.000000,7.781832e+00)(45.000000,8.955869e+00)(50.000000,1.012827e+01)(55.000000,1.129771e+01)(60.000000,1.246396e+01)
   };
%


    \addplot[smooth,blue,mark=diamond*,mark options={solid},every mark/.append style={solid, fill=white},thick] plot coordinates {
(0.000000,1.756599e-01)(5.000000,4.966399e-01)(10.000000,1.030228e+00)(15.000000,1.758941e+00)(20.000000,2.643776e+00)(25.000000,3.642970e+00)(30.000000,4.721364e+00)(35.000000,5.850350e+00)(40.000000,7.008091e+00)(45.000000,8.179668e+00)(50.000000,9.355917e+00)(55.000000,1.053181e+01)(60.000000,1.170509e+01)
   };
    \addplot[smooth,blue,mark=*,mark options={solid},every mark/.append style={solid, fill=white},thick] plot coordinates {
(0.000000,1.518203e-01)(5.000000,3.270330e-01)(10.000000,6.246327e-01)(15.000000,1.068515e+00)(20.000000,1.652829e+00)(25.000000,2.373422e+00)(30.000000,3.228158e+00)(35.000000,4.201558e+00)(40.000000,5.265086e+00)(45.000000,6.387735e+00)(50.000000,7.544012e+00)(55.000000,8.716334e+00)(60.000000,9.894056e+00)
   };

 \addplot[smooth,color=red,solid, mark=star,only marks] plot coordinates {
(0.000000,1.531799e-01)(5.000000,5.119999e-01)(10.000000,1.195531e+00)(15.000000,2.120716e+00)(20.000000,3.173824e+00)(25.000000,4.291602e+00)(30.000000,5.442328e+00)(35.000000,6.608918e+00)(40.000000,7.781832e+00)(45.000000,8.955869e+00)(50.000000,1.012827e+01)(55.000000,1.129771e+01)(60.000000,1.246396e+01)
   };
 \addplot[smooth,color=red,solid, mark=star,only marks] plot coordinates {
(0.000000,1.756599e-01)(5.000000,4.966399e-01)(10.000000,1.030228e+00)(15.000000,1.758941e+00)(20.000000,2.643776e+00)(25.000000,3.642970e+00)(30.000000,4.721364e+00)(35.000000,5.850350e+00)(40.000000,7.008091e+00)(45.000000,8.179668e+00)(50.000000,9.355917e+00)(55.000000,1.053181e+01)(60.000000,1.170509e+01)
   };
 \addplot[smooth,color=red,solid, mark=star,only marks] plot coordinates {
(0.000000,1.518203e-01)(5.000000,3.270330e-01)(10.000000,6.246327e-01)(15.000000,1.068515e+00)(20.000000,1.652829e+00)(25.000000,2.373422e+00)(30.000000,3.228158e+00)(35.000000,4.201558e+00)(40.000000,5.265086e+00)(45.000000,6.387735e+00)(50.000000,7.544012e+00)(55.000000,8.716334e+00)(60.000000,9.894056e+00)
   };

       \addplot[smooth,densely dashed,
green!70!black,thick] plot coordinates {
(0.000000,3.886667e-01)(5.000000,8.822131e-01)(10.000000,1.627757e+00)(15.000000,2.539584e+00)(20.000000,3.544107e+00)(25.000000,4.603158e+00)(30.000000,5.696860e+00)(35.000000,6.812928e+00)(40.000000,7.943126e+00)(45.000000,9.082002e+00)(50.000000,1.022608e+01)(55.000000,1.137323e+01)(60.000000,1.252216e+01)
   };
      \addplot[smooth,densely dashed,
green!70!black,thick] plot coordinates {
(0.000000,4.338734e-01)(5.000000,8.901744e-01)(10.000000,1.518050e+00)(15.000000,2.274333e+00)(20.000000,3.134510e+00)(25.000000,4.081647e+00)(30.000000,5.097350e+00)(35.000000,6.162739e+00)(40.000000,7.261537e+00)(45.000000,8.381570e+00)(50.000000,9.514590e+00)(55.000000,1.065534e+01)(60.000000,1.180062e+01)
   };
      \addplot[smooth,densely dashed,
green!70!black,thick] plot coordinates {
(0.000000,4.733663e-01)(5.000000,7.941951e-01)(10.000000,1.192001e+00)(15.000000,1.670077e+00)(20.000000,2.246424e+00)(25.000000,2.938831e+00)(30.000000,3.750362e+00)(35.000000,4.667055e+00)(40.000000,5.665504e+00)(45.000000,6.721608e+00)(50.000000,7.815553e+00)(55.000000,8.933081e+00)(60.000000,1.006478e+01)
   };

\legend{$\sigma_I^2=0.1484$, $\sigma_I^2=0.4769$,$\sigma_I^2=1.9328$,Dual-Hop Simulation,,,Single Link Simulation};
      \end{axis}
    \end{tikzpicture}
    \caption{Ergodic capacity of dual-hop UWOC systems using IM/DD for different levels of air bubbles and temperature gradients.}
      \label{fig:capacity1new}
          \end{center}
    \end{figure}

The ergodic capacity of the dual-hop 2 m long UWOC system in operation under the IM/DD technique is presented in Fig.~\ref{fig:capacity1new} for different levels of air bubbles and temperature gradients, associated with different scintillation index values along with the Monte-Carlo simulation capacity results of the single 1 m long UWOC link. It can be seen from this figure that as the effect of the air bubbles and/or the gradient of temperature increases, the scintillation index increases and therefore the ergodic capacity degrades, for both dual-hop as well as single UWOC links. Similar to Fig.~\ref{fig:BER1new}, it can be noticed from this figure that the turbulence-induced fading is an incremental function of the distance.
This observation justifies the advantage of dividing the long communication distance to shorter ones by means of intermediate relays in mitigating the short range issue and the turbulence-induced fading in UWOC.
\section{Conclusion}
In this paper, we have studied the performance of a dual-hop UWOC system
using AF fixed-gain relaying operating under both IM/DD and heterodyne detection over mixture EGG fading channels in the presence of both temperature gradients as well as air bubbles induced turbulence, for both fresh as well as salty waters. The EGG model has been shown to provide an excellent fit to the measured data acquired from an indoor laboratory experiment and has
a simple mathematical form, making it attractive from a performance analysis point of view.
We have derived closed-form expressions
for the PDF and the CDF of the end-to-end SNR in terms
of the bivariate Fox's H function. Moreover, based on these formulas, we have obtained exact closed-form expressions for fundamental system performance metrics such as the outage probability, the average BER for different modulation schemes, and the ergodic capacity under different turbulence conditions for both fresh and salty waters. Moreover, we have presented very tight asymptotic results for the obtained performance metrics at high SNR in terms of simple elementary functions.
We have also demonstrated the capability of dual-hop UWOC systems in mitigating the short range as well as the turbulence-induced fading issues. Finally, we have shown that the presence of air bubbles and gradient of temperature can severely degrade the end-to-end performance of both single link as well as dual-hop UWOC links.

\appendices
\renewcommand{\theequation}{\thesection.\arabic{equation}}
\setcounter{equation}{0}
\section{Proof of Theorem 1}\label{A}
In this appendix, we present the CDF of the overall SNR, $\gamma$, for a dual-hop UWOC system using AF fixed gain relaying
\begin{align}\label{CDFUWOCP1}
F_\gamma(\gamma)={\rm{Pr}}\left [ \frac{\gamma_1 \gamma_2}{\gamma_2+C}  \leq   \gamma \right],
\end{align}
where ${\rm{Pr}}[A]$ represents the probability of an event $A$. Then the CDF can be written as
\begin{align}\label{CDFUWOCP2}
\nonumber  F_\gamma(\gamma)&=\int_{0}^{\infty}{\rm{Pr}}\left [ \frac{\gamma_1 \gamma_2}{\gamma_2+C}  \leq  \gamma|\gamma_2 \right]f_{\gamma_2}(\gamma_2)\,d\gamma_2\\
&=1-\int_{\gamma}^{\infty}\overline{F}_{\gamma_2}\left ( \frac{C \gamma}{x-\gamma} \right )f_{\gamma_1}(x)\,dx,
\end{align}
where $\overline{F}_{\gamma_2}$ is the complementary CDF of $\gamma_2$ that can be expressed using \cite[Eqs.(07.34.03.0283.01), (07.34.03.0275.01), (06.07.03.0002.01), and (07.34.03.0613.01)]{Wolfram} as

\begin{align}\label{CCDF}
 \overline{F}_{\gamma_2}(\gamma_2)=\omega_2 \exp\left ( -\frac{1}{\lambda_2} \left ( \frac{\gamma_2}{\mu_{r_{2}}} \right )^{\frac{1}{r_2}}\right )
+\frac{(1-\omega_2)}{\Gamma(a_2)}{\rm{G}}_{1,2}^{2,0}\left[ \frac{1}{b_2^{c_2}}\left ( \frac{\gamma_2}{\mu_{r_{2}}} \right )^{\frac{c_2}{r_2}}\left| \begin{matrix} {1} \\ {a_2,0} \\ \end{matrix} \right.\right].
\end{align}\noindent
Substituting (\ref{SNRPDFHop}) and (\ref{CCDF}) in (\ref{CDFUWOCP2}), then using the change of variable $z=x-\gamma$ we obtain
\begin{align}\label{CDFUWOCP3}
F_\gamma(\gamma)=1-\mathcal{I}_1-\mathcal{I}_2-\mathcal{I}_3-\mathcal{I}_4,
\end{align}
with
\begin{align}\label{I1P1}
\mathcal{I}_1=\frac{\omega_1 \omega_2}{r_1}\int_{0}^{\infty}\frac{1}{(z+\gamma)}
\exp\left ( -\frac{1}{\lambda_2} \left ( \frac{C\gamma }{\mu_{r_{2}}z} \right )^{\frac{1}{r_2}}\right )
{\rm{G}}_{0,1}^{1,0}\left[ \frac{1}{\lambda_1}\left (\frac{ z+\gamma}{\mu_{r_1}} \right )^{\frac{1}{r_1}}\left| \begin{matrix} {-} \\ {1} \\ \end{matrix} \right.\right]dz.
\end{align}

\begin{align}\label{I2P1}
 \mathcal{I}_2=\frac{\omega_1 \left (1-\omega_2  \right )}{r_1 \Gamma(a_2)}\int_{0}^{\infty}\frac{1}{(z+\gamma)}
{\rm{G}}_{0,1}^{1,0}\left[ \frac{1}{\lambda_1}\left (\frac{ z+\gamma}{\mu_{r_1}} \right )^{\frac{1}{r_1}}\left| \begin{matrix} {-} \\ {1} \\ \end{matrix} \right.\right]
{\rm{G}}_{1,2}^{2,0}\left[ \frac{1}{b_2^{c_2}}\left (\frac{ C \gamma}{\mu_{r_2}z} \right )^{\frac{c_2}{r_2}}\left| \begin{matrix} {1} \\ {a_2,0} \\ \end{matrix} \right.\right]dz.                                                                                                                   \end{align}

\begin{align}\label{I3P1}
\mathcal{I}_3=\frac{c_1\omega_2 \left (1-\omega_1  \right )}{r_1 \Gamma(a_1)}\int_{0}^{\infty}\frac{1}{(z+\gamma)}
\exp\left ( -\frac{1}{\lambda_2} \left ( \frac{C\gamma }{\mu_{r_{2}}z} \right )^{\frac{1}{r_2}}\right )
{\rm{G}}_{0,1}^{1,0}\left[ \frac{1}{b_1^{c_1}}\left (\frac{ z+\gamma}{\mu_{r_1}} \right )^{\frac{c_1}{r_1}}\left| \begin{matrix} {-} \\ {a_1} \\ \end{matrix} \right.\right]dz.
\end{align}

\begin{align}\label{I4P1}
 \mathcal{I}_4=\frac{c_1 \left (1-\omega_1  \right )\left (1-\omega_2  \right )}{r_1 \Gamma(a_1)\Gamma(a_2)}\int_{0}^{\infty}\frac{1}{(z+\gamma)}
{\rm{G}}_{0,1}^{1,0}\left[ \frac{1}{b_1^{c_1}}\left (\frac{ z+\gamma}{\mu_{r_1}} \right )^{\frac{c_1}{r_1}}\left| \begin{matrix} {-} \\ {a_1} \\ \end{matrix} \right.\right]
{\rm{G}}_{1,2}^{2,0}\left[ \frac{1}{b_2^{c_2}}\left (\frac{ C \gamma}{\mu_{r_2}z} \right )^{\frac{c_2}{r_2}}\left| \begin{matrix} {1} \\ {a_2,0} \\ \end{matrix} \right.\right]dz.
\end{align}

Applying \cite[Eq.(8.4.3/1)]{PrudinkovVol3} to transform the exponential
function into its correspondent Meijer's G function, utilizing \cite[Eq.(9.31/2)]{Tableofintegrals} to inverse the argument of the Meijer's G function, then using the primary definition of the Meijer's G function in \cite[Eq.(9.301)]{Tableofintegrals}, $\mathcal{I}_1$ can be expressed as

\begin{align}\label{I1P2}
\nonumber \mathcal{I}_1&=\frac{\omega_1 \omega_2}{r_1}\frac{1}{(2\pi i)^{2}}\int\limits_{\mathcal{C}_1}\int\limits_{\mathcal{C}_2}
\Gamma(1-s)\Gamma(t) \left ( \frac{1}{\lambda_1 \mu_{r_{1}}^{\frac{1}{r_1}}}\right ) ^{s}\\
&\times\left ( \lambda_2 \left ( \frac{\mu_{r_2}}{C \gamma} \right ) ^{\frac{1}{r_2}}\right )^t
 \int_{0}^{\infty}(z+\gamma)^{\frac{s}{r_1}-1}\,z^{\frac{t}{r_2}}\,dz\,ds\,dt,
\end{align}
where $\mathcal{C}_1$ and $\mathcal{C}_2$ represent the $s$-plane and the $t$-plane contours, respectively.

Applying the integral identity \cite[Eq.(3.194/3)]{Tableofintegrals}, using \cite[Eq.(8.384/1)]{Tableofintegrals}, then making the change of variables $s=s/r_1$, $t=t/r_2$ with some algebraic manipulations, lead us to the following expression of $\mathcal{I}_1$

\begin{align}\label{I1P3}
\mathcal{I}_1= \frac{\omega_1 \omega_2}{(2\pi i)^{2}}\int\limits_{\mathcal{C}_1}\int\limits_{\mathcal{C}_2}
\frac{\Gamma(1+r_1s)}{\Gamma(1+s)}\Gamma(-t)\Gamma(1-r_2t)\Gamma(s+t)
\left ( \frac{\lambda_1^{r_1} \mu_{r_1}}{\gamma}\right ) ^{s}\left (\frac{C }{\lambda_2^{r_2} \mu_{r_2 }}  \right )^{t}ds\,dt,
\end{align}
which can be represented in terms of the extended generalized bivariate Fox's H function, ${\rm{H}_{\cdot,\cdot:\cdot,\cdot:\cdot,\cdot}^{\cdot,\cdot:\cdot,\cdot:\cdot,\cdot}}$ by means of using \cite[Eq.(1.1)]{HFoxIntegrals} as
\begin{align}\label{I1P4}
\mathcal{I}_1= \omega_1 \omega_2{\rm{H}}_{1,0:1,1:0,2}^{0,1:0,1:2,0}\begin{bmatrix}
\begin{matrix}
(1;1,1)\\--
\end{matrix}
\Bigg|\begin{matrix}
(0,r_1)\\(0,1)
\end{matrix}
\Bigg|\begin{matrix}
--\\(0,1)(1,r_2)
\end{matrix}
\Bigg|
\frac{\lambda_1^{r_1} \mu_{r_1}}{\gamma},\frac{C}{\lambda_2^{r_2} \mu_{r_2}}
\end{bmatrix}.
\end{align}

Note that when the first UWOC link is operating under the heterodyne detection technique, i.e. $r_1=1$, (\ref{I1P3}) simplifies to
\begin{align}\label{I1R1P1}
\mathcal{I}_1= \omega_1 \omega_2
\frac{1}{2\pi i}\int\limits_{\mathcal{C}_2}\Gamma(-t)\Gamma(1-r_2t)
\left (\frac{C }{\lambda_2^{r_2} \mu_{r_2 }}  \right )^{t}
 \frac{1}{2\pi i}
\int\limits_{\mathcal{C}_1}
\Gamma(s+t) \left ( \frac{\lambda_1 \mu_{r_1}}{\gamma}\right ) ^{s}ds\,dt.
\end{align}
By using the definition of the Meijer's G function in \cite[Eq.(9.301)]{Tableofintegrals}, (\ref{I1R1P1}) can be written as

\begin{align}\label{I1R1P2}
\mathcal{I}_1=\frac{ \omega_1 \omega_2}{2\pi i}
\int\limits_{\mathcal{C}_2}\Gamma(-t)\Gamma(1-r_2t)
{\rm{G}}_{1,0}^{0,1}\left[ \frac{\lambda_1 \mu_{r_1}}{\gamma}\left| \begin{matrix} {1-t} \\ {-} \\ \end{matrix} \right.\right]\left (\frac{C }{\lambda_2^{r_2} \mu_{r_2 }}  \right )^{t}dt.
\end{align}
Now, applying \cite[Eq.(07.34.03.0046.01)]{Wolfram}, we get

\begin{align}\label{I1R1P3}
\mathcal{I}_1= \omega_1 \omega_2 e^{ -\frac{\gamma}{\lambda_1 \mu_{r_{1}}} }
\frac{1}{2\pi i}\int\limits_{\mathcal{C}_2}\Gamma(-t)\Gamma(1-r_2t)
\left (\frac{C \gamma}{\lambda_1\lambda_2^{r_2} \mu_{r_{1} }\mu_{r_{2 }}}  \right )^{t}dt,
\end{align}\noindent
which can be easily expressed in terms of the Fox's H function in the special case of $r_1=1$ by utilizing \cite[Eq.(1.2)]{HFunction} as
\begin{align}\label{I1R1P4}
\mathcal{I}_1= \omega_1 \omega_2 e^{ -\frac{\gamma}{\lambda_1 \mu_{r_{1}}} }
{\rm{H}}_{0,2}^{2,0}\left[\frac{C \gamma}{\lambda_1\lambda_2^{r_2} \mu_{r_{1} }\mu_{r_{2 }}}\left| \begin{matrix} {--} \\ {(0,1)(1,r_{2})} \\ \end{matrix} \right. \right].
\end{align}

By utilizing \cite[Eq.(9.31/2) and (9.301)]{Tableofintegrals} then \cite[Eqs.(3.194/3) and (8.384/1)]{Tableofintegrals} followed by the change of variables $s=s/r_1$, $t=c_2 t/r_2$, and finally applying \cite[Eq. (1.1)]{HFoxIntegrals}, we get the following closed-form expression of $\mathcal{I}_2$ in terms of the bivariate Fox's H function as

\begin{align}\label{I2P2}
\mathcal{I}_2= \frac{\omega_1 \left (1-\omega_2  \right )}{\Gamma(a_2)}
{\rm{H}}_{1,0:1,1:0,2}^{0,1:0,1:2,0}\begin{bmatrix}
\begin{matrix}
(1;1,1)\\--
\end{matrix}
\Bigg|\begin{matrix}
(0,r_1)\\(0,1)
\end{matrix}
\Bigg|\begin{matrix}
--\\(0,1)(a_2,\frac{r_2}{c_2})
\end{matrix}
\Bigg|
\frac{\lambda_1^{r_1} \mu_{r_1}}{\gamma},\frac{C}{b_2^{r_2} \mu_{r_2}}
\end{bmatrix}.
\end{align}

In the special case where the first hop undergoes the heterodyne detection technique (i.e. $r_1=1$) and similar to $\mathcal{I}_1$, $\mathcal{I}_2$ reduces to

\begin{align}\label{I2R1}
\mathcal{I}_2= \frac{\omega_1 \left (1-\omega_2  \right ) }{\Gamma(a_2)}e^{ -\frac{\gamma}{\lambda_1 \mu_{r_{1}}} }
{\rm{H}}_{0,2}^{2,0}\left[\frac{C \gamma}{\lambda_1b_2^{r_2} \mu_{r_{1} }\mu_{r_{2 }}}\left| \begin{matrix} {--} \\ {(0,1)(a_2,\frac{r_{2}}{c_2})} \\ \end{matrix} \right. \right].
\end{align}

By using \cite[Eq.(9.31/2) and (9.301)]{Tableofintegrals} and \cite[Eqs.(3.194/3) and (8.384/1)]{Tableofintegrals} then the change of variables $s=c_1s/r_1$, $t=t/r_2$, and applying \cite[Eq. (1.1)]{HFoxIntegrals} with some simplifications, an closed-form expression for $\mathcal{I}_3$ can be obtained as

\begin{align}\label{I3P2}
\mathcal{I}_3= \frac{\omega_2 \left (1-\omega_1  \right )}{\Gamma(a_1)}
 {\rm{H}}_{1,0:1,1:0,2}^{0,1:0,1:2,0}\begin{bmatrix}
\begin{matrix}
(1;1,1)\\--
\end{matrix}
\Bigg|\begin{matrix}
\left (1-a_1,\frac{r_1}{c_1}  \right )\\(0,1)
\end{matrix}
\Bigg|\begin{matrix}
--\\(0,1)(1,r_2)
\end{matrix}
\Bigg|
\frac{b_1^{r_1} \mu_{r_1}}{\gamma},\frac{C}{\lambda_2^{r_2} \mu_{r_2}}
\end{bmatrix}.
\end{align}

Finally, utilizing \cite[Eq.(9.31/2), (9.301), (3.194/3), and (8.384/1)]{Tableofintegrals} along with the change of variables $s=c_1s/r_1$, $t=c_2 t/r_2$ then employing \cite[Eq. (1.1)]{HFoxIntegrals} results in the following expression of $\mathcal{I}_4$

\begin{align}\label{I4P2}
 \mathcal{I}_4= \frac{\left (1-\omega_1  \right )\left ( 1-\omega_2  \right )}{\Gamma(a_1)\Gamma(a_2)}
  {\rm{H}}_{1,0:1,1:0,2}^{0,1:0,1:2,0}\begin{bmatrix}
\begin{matrix}
(1;1,1)\\--
\end{matrix}
\Bigg|\begin{matrix}
\left (1-a_1,\frac{r_1}{c_1}  \right )\\(0,1)
\end{matrix}
\Bigg|\begin{matrix}
--\\(0,1)\left (a_2,\frac{r_2}{c_2}  \right )
\end{matrix}
\Bigg|
\frac{b_1^{r_1} \mu_{r_1}}{\gamma},\frac{C}{b_2^{r_2} \mu_{r_2}}
\end{bmatrix}.
\end{align}\normalsize

Now, substituting (\ref{I1P4}), (\ref{I2P2}), (\ref{I3P2}), and (\ref{I4P2}) in (\ref{CDFUWOCP3}) yields the desired CDF expression of $\gamma$ given in terms of the bivariate Fox's H function in (\ref{SNRCDF}).

\setcounter{equation}{0}
\section{Proof of Corollary 1.1}\label{B}
By using the definition of the Fox's H function in \cite[Eq.(1.1)]{HFunction}, $\mathcal{I}_1$ can be written as

\begin{align}\label{I1HighSNR1}
\mathcal{I}_1= \frac{\omega_1 \omega_2}{2\pi i}\int\limits_{\mathcal{C}_1}
\frac{\Gamma(1+r_1s)}{\Gamma(1+s)}{\rm{H}}_{1,2}^{2,1}\left[\frac{C }{\lambda_2^{r_2}\mu_{r_{2 }}}\left| \begin{matrix} {(1-s,1)} \\ {(0,1)(1,r_{2})} \\ \end{matrix} \right. \right]
 \left ( \frac{\lambda_1^{r_1} \mu_{r_1}}{\gamma}\right )^{s}ds
\end{align}
For high values of $\mu_{r_{2 }}$ the Fox's H functions in (\ref{I1HighSNR1}) can be approximated by means of using the Taylor expansion of the Fox's H function in \cite[Eq.(1.8.4)]{HTranforms} and keeping the first terms as

\begin{align}\label{I1HighSNR2}
{\rm{H}}_{1,2}^{2,1}\left[\frac{C }{\lambda_2^{r_2}\mu_{r_{2 }}}\left| \begin{matrix} {(1-s,1)} \\ {(0,1)(1,r_{2})} \\ \end{matrix} \right. \right]\underset{\mu_{r_{2}}\to \infty}{\mathop{\approx }} \Gamma(s).
\end{align}
Substituting (\ref{I1HighSNR2}) into (\ref{I1HighSNR1}) and applying \cite[Eq.(1.1)]{HFunction}, $\mathcal{I}_1$ can be rewritten as

\begin{align}\label{I1HighSNR3}
\mathcal{I}_1\underset{\mu_{r_{2}}\to \infty}{\mathop{\approx }} \omega_1 \omega_2{\rm{H}}_{2,1}^{0,2}\left[\frac{\lambda_1^{r_1}\mu_{r_{1 }}}{\gamma}\left| \begin{matrix} {(0,r_1)(1,1)} \\ {(0,1)} \\ \end{matrix} \right. \right],
\end{align}
which can be further simplified by using \cite[Eq.(1.5.9)]{HTranforms} yielding
\begin{align}\label{I1HighSNR4}
\mathcal{I}_1\underset{\mu_{r_{1}},\mu_{r_{2}}\to \infty}{\mathop{\approx }} \omega_1 \omega_2
\left ( 1-\left (\frac{\gamma}{\lambda_1^{r_1}\mu_{r_{1 }}}  \right )^{\frac{1}{r_1}}\right ).
\end{align}

Following the same approach as in the case of $\mathcal{I}_1$ and applying \cite[Eq.(1.1)]{HFunction} then \cite[Eqs.(1.8.4) and (1.5.9)]{HTranforms} with some algebraic manipulations, we get the asymptotic expressions of $\mathcal{I}_2$, $\mathcal{I}_3$, and $\mathcal{I}_4$ as

\begin{align}\label{I2HighSNR}
\mathcal{I}_2\underset{\mu_{r_{1}},\mu_{r_{2}}\to \infty}{\mathop{\approx }} \omega_1 (1-\omega_2)\left ( 1-\left (\frac{\gamma}{\lambda_1^{r_1}\mu_{r_{1 }}}  \right )^{\frac{1}{r_1}} \right )
-\frac{r_1  \omega_1 (1-\omega_2) }{\Gamma(a_2+1)}\Gamma\left (1-\frac{r_1 a_2 c_2 }{r_2} \right )
\left ( \frac{C \gamma}{\lambda_1^{r_1} b_2^{r_2}\mu_{r_{1 }}\mu_{r_{2 }}} \right )^{\frac{a_2 c_2}{r_2}}.
\end{align}
\begin{align}\label{I3HighSNR}
 \mathcal{I}_3\underset{\mu_{r_{1}},\mu_{r_{2}}\to \infty}{\mathop{\approx }} \omega_2(1-\omega_1)-\frac{\omega_2(1-\omega_1)}{\Gamma(a_1+1)}\left ( \frac{\gamma}{b_1^{r_1}\mu_{r_{1}}} \right )^{\frac{a_1 c_1}{r_1}}
-\frac{\omega_2(1-\omega_1)}{\Gamma(a_1)}\Gamma\left ( a_1-\frac{r_1}{c_1 r_2} \right )\left ( \frac{C \gamma}{b_1^{r_1}\lambda_2^{r_2}\mu_{r_{1}}\mu_{r_{2}}} \right )^{\frac{1}{r_2}}.
\end{align}

\begin{align}\label{I4HighSNR}
\nonumber \mathcal{I}_4\underset{\mu_{r_{1}},\mu_{r_{2}}\to \infty}{\mathop{\approx }} &(1-\omega_1)(1-\omega_2)-\frac{(1-\omega_1)(1-\omega_2)}{\Gamma(a_1+1)}\left ( \frac{\gamma}{b_1^{r_1}\mu_{r_{1}}} \right )^{\frac{a_1 c_1}{r_1}}\\
&-\frac{(1-\omega_1)(1-\omega_2)}{\Gamma(a_1)\Gamma(a_2+1)}\Gamma\left ( a_1-\frac{r_1 a_2 c_2}{c_1 r_2} \right )\left ( \frac{C \gamma}{b_1^{r_1}b_2^{r_2}\mu_{r_{1}}\mu_{r_{2}}} \right )^{\frac{a_2 c_2}{r_2}}.
\end{align}

Substituting (\ref{I1HighSNR4}), (\ref{I2HighSNR}), (\ref{I3HighSNR}), and (\ref{I4HighSNR}) into (\ref{CDFUWOCP3}) with some simplifications, we get an accurate simple closed-form expression for the CDF at high SNR as shown by (\ref{CDFHighSNR}).

\setcounter{equation}{0}
\section{Proof of Theorem 2}\label{C}
The PDF of the end-to-end SNR can be obtained by differentiating (\ref{SNRCDF}) with respect to $\gamma$ as

\begin{align}\label{PDFP1}
\nonumber & f_{\gamma}(\gamma)=\frac{\omega_1 \omega_2}{\gamma (2 \pi i)^2}\int\limits_{\mathcal{C}_1}\int\limits_{\mathcal{C}_2}
\frac{\Gamma(1+r_1s)}{\Gamma(s)}\Gamma(-t)\Gamma(1-r_2t)\Gamma(s+t)
\left ( \frac{\lambda_1^{r_1} \mu_{r_1}}{\gamma}\right ) ^{s}\left (\frac{C }{\lambda_2^{r_2} \mu_{r_2 }}  \right )^{t}ds\,dt\\
\nonumber &+\frac{\omega_1 \left (1-\omega_2  \right )}{\Gamma(a_2)\gamma (2 \pi i)^2}\int\limits_{\mathcal{C}_1}\int\limits_{\mathcal{C}_2}
\frac{\Gamma(1+r_1s)}{\Gamma(s)}\Gamma(-t)\Gamma(s+t)\Gamma\left(a_2-\frac{r_2}{c_2}t\right)\left ( \frac{\lambda_1^{r_1} \mu_{r_1}}{\gamma}\right ) ^{s}\left (\frac{C }{b_2^{r_2} \mu_{r_2 }}  \right )^{t}ds\,dt\\
\nonumber &
+\frac{\omega_2 (1-\omega_1)}{\Gamma(a_1)\gamma (2 \pi i)^2}\int\limits_{\mathcal{C}_1}\int\limits_{\mathcal{C}_2}\frac{\Gamma\left(a_1+\frac{r_1}{c_1}s\right)}{\Gamma(s)}\Gamma(-t)\Gamma(1-r_2t)\Gamma(s+t)
\left ( \frac{b_1^{r_1} \mu_{r_1}}{\gamma}\right ) ^{s}\left (\frac{C }{\lambda_2^{r_2} \mu_{r_2 }}  \right )^{t}ds\,dt\\
&+\frac{ (1-\omega_1)(1-\omega_2)}{\Gamma(a_1)\Gamma(a_2)\gamma (2 \pi i)^2}\int\limits_{\mathcal{C}_1}\int\limits_{\mathcal{C}_2}
\frac{\Gamma\left(a_1+\frac{r_1}{c_1}s\right)}{\Gamma(s)}\Gamma(-t)\Gamma\left(a_2-\frac{r_2}{c_2}t\right) \Gamma(s+t)
\left ( \frac{b_1^{r_1} \mu_{r_1}}{\gamma}\right )^{s}\left (\frac{C }{b_2^{r_2} \mu_{r_2 }}  \right )^{t}ds\,dt.
\end{align} \noindent
Therefore, applying \cite[Eq.(1.1)]{HFoxIntegrals} we get the PDF in exact closed-form as shown in (\ref{SNRPDF}).

\setcounter{equation}{0}
\section{Proof of Theorem 3}\label{D}
Using \cite[Eq.(2.3)]{HFoxIntegrals}, the PDF of $\gamma$ can be formulated in terms of integrals involving the product of Fox's H functions as

\begin{align}\label{PDFProductH}
\nonumber & f_{\gamma}(\gamma)=\frac{\omega_1 \omega_2}{\gamma}\int_{0}^{\infty}\frac{e^{-x}}{x}
{\rm{H}}_{1,1}^{0,1}\left[\frac{\lambda_1^{r_1}\mu_{r_{1 }}x}{\gamma}\left| \begin{matrix} {(0,r_1)} \\ {(1,1)} \\ \end{matrix} \right. \right]{\rm{H}}_{0,2}^{2,0}\left[\frac{C x }{\lambda_2^{r_2}\mu_{r_{2 }}}\left| \begin{matrix} {--} \\ {(0,1)(1,r_{2})} \\ \end{matrix} \right. \right]dx\\
\nonumber & +\frac{\omega_1 \left (1-\omega_2  \right )}{\Gamma(a_2)\gamma}
\int_{0}^{\infty}\frac{e^{-x}}{x}{\rm{H}}_{1,1}^{0,1}\left[\frac{\lambda_1^{r_1}\mu_{r_{1 }}x}{\gamma}\left| \begin{matrix} {(0,r_1)} \\ {(1,1)} \\ \end{matrix} \right. \right]{\rm{H}}_{0,2}^{2,0}\left[\frac{C x }{b_2^{r_2}\mu_{r_{2 }}}\left| \begin{matrix} {--} \\ {(0,1)(a_2,\frac{r_{2}}{c_2})} \\ \end{matrix} \right. \right]dx\\
\nonumber &+\frac{\omega_2 \left (1-\omega_1  \right )}{\Gamma(a_1)\gamma}
\int_{0}^{\infty}\frac{e^{-x}}{x}
{\rm{H}}_{1,1}^{0,1}\left[\frac{b_1^{r_1}\mu_{r_{1 }}x}{\gamma}\left| \begin{matrix} {(1-a_1,\frac{r_1}{c_1})} \\ {(1,1)} \\ \end{matrix} \right. \right]{\rm{H}}_{0,2}^{2,0}\left[\frac{C x }{\lambda_2^{r_2}\mu_{r_{2 }}}\left| \begin{matrix} {--} \\ {(0,1)(1,r_2)} \\ \end{matrix} \right. \right]dx\\
 &  +\frac{\left (1-\omega_1  \right )\left (1-\omega_2  \right )}{\Gamma(a_1)\Gamma(a_2)\gamma}
\int_{0}^{\infty}\frac{e^{-x}}{x}{\rm{H}}_{1,1}^{0,1}\left[\frac{b_1^{r_1}\mu_{r_{1 }}x}{\gamma}\left| \begin{matrix} {(1-a_1,\frac{r_1}{c_1})} \\ {(1,1)} \\ \end{matrix} \right. \right]{\rm{H}}_{0,2}^{2,0}\left[\frac{C x }{b_2^{r_2}\mu_{r_{2 }}}\left| \begin{matrix} {--} \\ {(0,1)(a_2,\frac{r_2}{c_2})} \\ \end{matrix} \right. \right]dx.
\end{align}

Substituting (\ref{PDFProductH}) into the definition of the moments and utilizing \cite[Eq.(1.58)]{HFunction} then applying \cite[Eq.(2.8)]{HFunction}, the moments can be expressed as

\begin{align}\label{MomentsP1}
\nonumber &\mathbb{E}[\gamma^n]=\left [\frac{\omega_1 \omega_2 }{\Gamma(n)}\Gamma(1+r_1n) \left ( \lambda_1^{r_1}\mu_{r_{1}} \right )^n+\frac{\omega_2 \left (1-\omega_1  \right ) }{\Gamma(a_1)\Gamma(n)} \Gamma\left (a_1+\frac{r_1}{c_1}n  \right )
\left ( b_1^{r_1}\mu_{r_{1}} \right )^n \right ]
\int_{0}^{\infty}\frac{e^{-x}}{x^{1-n}}\\
\nonumber & \times{\rm{H}}_{0,2}^{2,0}\left[\frac{C x }{\lambda_2^{r_2}\mu_{r_{2 }}}\left| \begin{matrix} {--} \\ {(0,1)(1,r_{2})} \\ \end{matrix} \right. \right]dx+\int_{0}^{\infty}\frac{e^{-x}}{x^{1-n}}{\rm{H}}_{0,2}^{2,0}\left[\frac{C x }{b_2^{r_2}\mu_{r_{2 }}}\left| \begin{matrix} {--} \\ {(0,1)\left (a_2,\frac{r_2}{c_2}  \right )} \\ \end{matrix} \right. \right]dx\\
 & \times \left [\frac{\omega_1 \left (1-\omega_2  \right )}{\Gamma(a_2)\Gamma(n)}\Gamma(1+r_1n)   \left ( \lambda_1^{r_1}\mu_{r_{1}} \right )^n+\frac{ \left (1-\omega_1  \right )\left (1-\omega_2  \right ) }{\Gamma(a_1)\Gamma(a_2)\Gamma(n)} \Gamma\left (a_1+\frac{r_1}{c_1}n  \right ) \left ( b_1^{r_1}\mu_{r_{1}} \right )^n \right].
\end{align}
Finally, by means of employing \cite[Eq.(2.8.4)]{HTranforms} the moments can be obtained in terms of the Fox's H function as shown in (\ref{moments}).



\setcounter{equation}{0}
\section{Proof of Theorem 4}\label{E}
Substituting (\ref{PDFP1}) in (\ref{BERdef}) then integrating
using \cite[Eq.(3.381/4)]{Tableofintegrals}, the BER can be written as

\begin{align}
\nonumber &P_e=\delta\sum_{k=1}^{n}\left\{\frac{1}{2}-\frac{\omega_1 \omega_2}{2\Gamma(p) \left (  2 \pi i\right )^2}\int\limits_{\mathcal{C}_1}\int\limits_{\mathcal{C}_2}
\frac{\Gamma(1+r_1s)\Gamma(p-s)}{\Gamma(1+s)}\Gamma(-t)\Gamma(1-r_2t)\Gamma(s+t)
\left (q_k \lambda_1^{r_1} \mu_{r_1}\right ) ^{s}\right.\\
\nonumber &\times\left. \left (\frac{C }{\lambda_2^{r_2} \mu_{r_2 }}  \right )^{t}ds\,dt -\frac{\omega_1 \left (1-\omega_2  \right )}{2\Gamma(a_2)\Gamma(p) \left ( 2\pi i \right )^2}
 \int\limits_{\mathcal{C}_1}\int\limits_{\mathcal{C}_2}\frac{\Gamma(1+r_1s)\Gamma(p-s)}{\Gamma(1+s)}\Gamma(-t)\Gamma\left(a_2-\frac{r_2}{c_2}t\right)\Gamma(s+t)
\right.\\
\nonumber & \times \left.\left (q_k \lambda_1^{r_1}\mu_{r_1}\right ) ^{s} \left (\frac{C }{b_2^{r_2} \mu_{r_2 }}  \right )^{t}ds\,dt
-\frac{\omega_2 \left (1-\omega_1  \right )}{2\Gamma(a_1)\Gamma(p) \left (  2 \pi i\right )^2}\int\limits_{\mathcal{C}_1}\int\limits_{\mathcal{C}_2}
\frac{\Gamma\left(a_1+\frac{r_1}{c_1}s\right)\Gamma(p-s)}{\Gamma(1+s)}\Gamma(-t)\Gamma(1-r_2t)\right.\\
\nonumber & \times
\left.\Gamma(s+t)\left (q_k b_1^{r_1} \mu_{r_1}\right ) ^{s}\left (\frac{C }{\lambda_2^{r_2} \mu_{r_2 }}  \right )^{t}ds\,dt
-\frac{(1-\omega_1) \left (1-\omega_2  \right )}{2\Gamma(a_1)\Gamma(a_2)\Gamma(p) \left ( 2\pi i \right )^2}\int\limits_{\mathcal{C}_1}\int\limits_{\mathcal{C}_2}\frac{\Gamma(a_1+\frac{r_1}{c_1}s)\Gamma(p-s)}{\Gamma(1+s)}\right.\\
& \times \left.\Gamma(-t)\Gamma\left(a_2-\frac{r_2}{c_2}t\right)\Gamma(s+t)\left (q_k b_1^{r_1}\mu_{r_1}\right ) ^{s}\left (\frac{C }{b_2^{r_2} \mu_{r_2 }}  \right )^{t}ds\,dt\right\}.
\end{align} \noindent
Applying \cite[Eq.(1.1)]{HFoxIntegrals}, the average BER can be derived in terms of the bivariate Fox's H function as shown in (\ref{SNRBER}).


\setcounter{equation}{0}
\section{Proof of Theorem 5}\label{F}
In order to obtain the ergodic capacity of $\gamma$, we first substitute (\ref{PDFP1}) in (\ref{CAPDEF}) then utilizing the Meijer's G representation of $\ln(1 + \tau \gamma)$ as ${\rm{G}}_{2,2}^{1,2}\left[\tau \gamma\left| \begin{matrix} {1,1} \\ {1,0} \\ \end{matrix} \right. \right]$ \cite[Eq.(07.34.03.0456.01)]{Wolfram}, and applying the integral identity \cite[Eq.(7.811/4)]{Tableofintegrals} yielding

\begin{align}\label{CapacityP1}
\nonumber &\overline{C}=\frac{\omega_1 \omega_2}{ (2 \pi i)^2}\int\limits_{\mathcal{C}_1}\int\limits_{\mathcal{C}_2}
\frac{\Gamma(1+r_1s)\Gamma(s)\Gamma(1-s)}{\Gamma(1+s)}\Gamma(-t)\Gamma(1-r_2t)\Gamma(s+t)\left (\tau \lambda_1^{r_1} \mu_{r_1}\right ) ^{s} \left (\frac{C }{\lambda_2^{r_2} \mu_{r_2 }}  \right )^{t}ds\,dt
\\
\nonumber & +\frac{\omega_1 \left (1-\omega_2  \right )}{\Gamma(a_2) (2 \pi i)^2}\int\limits_{\mathcal{C}_1}\int\limits_{\mathcal{C}_2}
\frac{\Gamma(1+r_1s)\Gamma(s)}{\Gamma(1+s)}
\Gamma(1-s)\Gamma(-t)\Gamma\left(a_2-\frac{r_2}{c_2}t\right)\Gamma(s+t)
\left ( \tau\lambda_1^{r_1} \mu_{r_1}\right ) ^{s}\left (\frac{C }{b_2^{r_2} \mu_{r_2 }}  \right )^{t}\\
\nonumber & \times ds\,dt+\frac{\omega_2 (1-\omega_1)}{\Gamma(a_1) (2 \pi i)^2}
\int\limits_{\mathcal{C}_1}\int\limits_{\mathcal{C}_2}
\frac{\Gamma\left(a_1+\frac{r_1}{c_1}s\right)\Gamma(s)\Gamma(1-s)}{\Gamma(1+s)}\Gamma(-t)\Gamma(1-r_2t)\Gamma(s+t)\left (\tau b_1^{r_1} \mu_{r_1}\right ) ^{s}\\
\nonumber & \times
 \left (\frac{C }{\lambda_2^{r_2} \mu_{r_2 }}  \right )^{t}ds\,dt+\frac{(1-\omega_1)(1-\omega_2)}{\Gamma(a_1)\Gamma(a_2) (2 \pi i)^2}
\int\limits_{\mathcal{C}_1}\int\limits_{\mathcal{C}_2}
\frac{\Gamma\left(a_1+\frac{r_1}{c_1}s\right)\Gamma(s)}{\Gamma(1+s)}
 \Gamma(1-s)\Gamma(-t)\Gamma\left(a_2-\frac{r_2}{c_2}t\right)\\
 & \times \Gamma(s+t) \left (\tau b_1^{r_1} \mu_{r_1}\right ) ^{s}\left (\frac{C }{b_2^{r_2} \mu_{r_2 }}  \right )^{t}ds\,dt.
\end{align} \noindent
Then exploiting (1.1) from \cite{HFoxIntegrals}, the ergodic capacity of dual-hop UWOC systems can be obtained in closed-form in terms of the bivariate Fox's H function in (\ref{SNRCapacity}).

\bibliographystyle{IEEEtran}
\bibliography{IEEEabrv,IEEEexample}

\begin{thebibliography}{10}
\providecommand{\url}[1]{#1}
\csname url@samestyle\endcsname
\providecommand{\newblock}{\relax}
\providecommand{\bibinfo}[2]{#2}
\providecommand{\BIBentrySTDinterwordspacing}{\spaceskip=0pt\relax}
\providecommand{\BIBentryALTinterwordstretchfactor}{4}
\providecommand{\BIBentryALTinterwordspacing}{\spaceskip=\fontdimen2\font plus
\BIBentryALTinterwordstretchfactor\fontdimen3\font minus
  \fontdimen4\font\relax}
\providecommand{\BIBforeignlanguage}[2]{{%
\expandafter\ifx\csname l@#1\endcsname\relax
\typeout{** WARNING: IEEEtran.bst: No hyphenation pattern has been}%
\typeout{** loaded for the language `#1'. Using the pattern for}%
\typeout{** the default language instead.}%
\else
\language=\csname l@#1\endcsname
\fi
#2}}
\providecommand{\BIBdecl}{\relax}
\BIBdecl

\bibitem{SurveyZeng}
Z.~Zeng, S.~Fu, H.~Zhang, Y.~Dong, and J.~Cheng, ``A survey of underwater
  optical wireless communications,'' \emph{IEEE Communications Surveys
  Tutorials}, vol.~19, no.~1, pp. 204--238, Firstquarter 2017.

\bibitem{LEDadvances}
J.~Xu, M.~Kong, A.~Lin, Y.~Song, X.~Yu, F.~Qu, J.~Han, and N.~Deng,
  ``{OFDM}-based broadband underwater wireless optical communication system
  using a compact blue {LED},'' \emph{Optics Communications}, vol. 369, pp. 100
  -- 105, 2016.

\bibitem{HassanHighDR}
H.~M. Oubei, C.~Li, K.-H. Park, T.~K. Ng, M.-S. Alouini, and B.~S. Ooi, ``2.3
  {G}bit/s underwater wireless optical communications using directly modulated
  520 nm laser diode,'' \emph{Opt. Express}, vol.~23, no.~16, pp.
  20\,743--20\,748, Aug. 2015.

\bibitem{OubeiQAMOFDM}
H.~M. Oubei, J.~R. Duran, B.~Janjua, H.-Y. Wang, C.-T. Tsai, Y.-C. Chi, T.~K.
  Ng, H.-C. Kuo, J.-H. He, M.-S. Alouini, G.-R. Lin, and B.~S. Ooi, ``4.8
  {G}bit/s 16-{QAM}-{OFDM} transmission based on compact 450-nm laser for
  underwater wireless optical communication,'' \emph{Opt. Express}, vol.~23,
  no.~18, pp. 23\,302--23\,309, Sep. 2015.

\bibitem{Oubei20m}
C.~Shen, Y.~Guo, H.~M. Oubei, T.~K. Ng, G.~Liu, K.-H. Park, K.-T. Ho, M.-S.
  Alouini, and B.~S. Ooi, ``20-meter underwater wireless optical communication
  link with 1.5 {G}bps data rate,'' \emph{Opt. Express}, vol.~24, no.~22, pp.
  25\,502--25\,509, Oct. 2016.

\bibitem{Simodetection}
W.~Liu, Z.~Xu, and L.~Yang, ``{SIMO} detection schemes for underwater optical
  wireless communication under turbulence,'' \emph{Photon. Res.}, vol.~3,
  no.~3, pp. 48--53, Jun. 2015.

\bibitem{HighBdW}
F.~Hanson and S.~Radic, ``High bandwidth underwater optical communication,''
  \emph{Appl. Opt.}, vol.~47, no.~2, pp. 277--283, Jan. 2008.

\bibitem{Jaruwatanadilok}
S.~Jaruwatanadilok, ``Underwater wireless optical communication channel
  modeling and performance evaluation using vector radiative transfer theory,''
  \emph{IEEE Journal on Selected Areas in Communications}, vol.~26, no.~9, pp.
  1620--1627, Dec. 2008.

\bibitem{Gabriel2}
C.~Gabriel, M.~A. Khalighi, S.~Bourennane, P.~Leon, and V.~Rigaud,
  ``Monte-{C}arlo-{B}ased channel characterization for underwater optical
  communication systems,'' \emph{IEEE/OSA Journal of Optical Communications and
  Networking}, vol.~5, no.~1, pp. 1--12, Jan 2013.

\bibitem{KihongWCL}
C.~Li, K.~H. Park, and M.-S. Alouini, ``On the use of a direct radiative
  transfer equation solver for path loss calculation in underwater optical
  wireless channels,'' \emph{IEEE Wireless Communications Letters}, vol.~4,
  no.~5, pp. 561--564, Oct. 2015.

\bibitem{Abla19}
A.~{Kammoun}, Z.~{Jiusi}, B.~S. {Ooi}, and M.-S. {Alouini}, ``Impact of
  wavelength on the path-loss of turbid underwater communication systems,'' in
  \emph{2019 IEEE Wireless Communications and Networking Conference (WCNC)},
  Apr. 2019, pp. 1--7.

\bibitem{Mahdi19}
E.~{Illi}, F.~E. {Bouanani}, K.~{Park}, F.~{Ayoub}, and M.-S. {Alouini}, ``An
  improved accurate solver for the time-dependent {RTE} in underwater optical
  wireless communications,'' \emph{IEEE Access}, vol.~7, pp. 96\,478--96\,494,
  Jul. 2019.

\bibitem{UOT}
R.~J. Hill, ``Optical propagation in turbulent water,'' \emph{J. Opt. Soc.
  Am.}, vol.~68, no.~8, pp. 1067--1072, Aug. 1978.

\bibitem{Nikishov2000}
V.~V. Nikishov and V.~I. Nikishov, ``Spectrum of turbulent fluctuations of the
  sea-water refraction index,'' \emph{International Journal of Fluid Mechanics
  Research}, vol.~27, no.~1, pp. 82--98, 2000.

\bibitem{Korotkova2012}
O.~Korotkova, N.~Farwell, and E.~Shchepakina, ``Light scintillation in oceanic
  turbulence,'' \emph{Waves in Random and Complex Media}, vol.~22, no.~2, pp.
  260--266, 2012.

\bibitem{Hagem}
\BIBentryALTinterwordspacing
R.~M. Hagem, D.~V. Thiel, S.~G. O'Keefe, and T.~Fickenscher, ``The effect of
  air bubbles on an underwater optical communications system for wireless
  sensor network applications,'' \emph{Microwave and Optical Technology
  Letters}, vol.~54, no.~3, pp. 729--732, 2012. [Online]. Available:
  \url{https://onlinelibrary.wiley.com/doi/abs/10.1002/mop.26664}
\BIBentrySTDinterwordspacing

\bibitem{OubeiBubbles}
H.~M. {Oubei}, R.~T. {ElAfandy}, K.~{Park}, T.~K. {Ng}, M.-S. {Alouini}, and
  B.~S. {Ooi}, ``Performance evaluation of underwater wireless optical
  communications links in the presence of different air bubble populations,''
  \emph{IEEE Photonics Journal}, vol.~9, no.~2, pp. 1--9, Apr. 2017.

\bibitem{Gercekcioglu}
H.~Ger\c{c}ekcio\u{g}lu, ``Bit error rate of focused {G}aussian beams in weak
  oceanic turbulence,'' \emph{J. Opt. Soc. Am. A}, vol.~31, no.~9, pp.
  1963--1968, Sep. 2014.

\bibitem{lognmunderwater}
X.~Yi, Z.~Li, and Z.~Liu, ``Underwater optical communication performance for
  laser beam propagation through weak oceanic turbulence,'' \emph{Appl. Opt.},
  vol.~54, no.~6, pp. 1273--1278, Feb. 2015.

\bibitem{ExpLN}
M.~V. Jamali, P.~Khorramshahi, A.~Tashakori, A.~Chizari, S.~Shahsavari,
  S.~AbdollahRamezani, M.~Fazelian, S.~Bahrani, and J.~A. Salehi, ``Statistical
  distribution of intensity fluctuations for underwater wireless optical
  channels in the presence of air bubbles,'' in \emph{2016 Iran Workshop on
  Communication and Information Theory (IWCIT'16)}, Tehran, Iran, May 2016, pp.
  1--6.

\bibitem{Oubei:17sg}
H.~M. Oubei, E.~Zedini, R.~T. ElAfandy, A.~Kammoun, T.~K. Ng, M.-S. Alouini,
  and B.~S. Ooi, ``Efficient {W}eibull channel model for salinity induced
  turbulent underwater wireless optical communications,'' in \emph{22nd
  OptoElectronics and Communications Conference}, ser. Oral 2-3K-2, Singapore,
  2017.

\bibitem{Oubei:17}
H.~M. Oubei, E.~Zedini, R.~T. ElAfandy, A.~Kammoun, M.~Abdallah, T.~K. Ng,
  M.~Hamdi, M.-S. Alouini, and B.~S. Ooi, ``Simple statistical channel model
  for weak temperature-induced turbulence in underwater wireless optical
  communication systems,'' \emph{Opt. Lett.}, vol.~42, no.~13, pp. 2455--2458,
  Jul. 2017.

\bibitem{ConfUWOC}
E.~Zedini, H.~M. Oubei, A.~Kammoun, M.~Hamdi, B.~S. Ooi, and M.-S. Alouini, ``A
  new simple model for underwater wireless optical channels in the presence of
  air bubbles,'' in \emph{IEEE Global Communications Conference (GLOBECOM'17)},
  Dec. 2017, pp. 1--6.

\bibitem{JamaliModel}
M.~V. Jamali, A.~Mirani, A.~Parsay, B.~Abolhassani, P.~Nabavi, A.~Chizari,
  P.~Khorramshahi, S.~Abdollahramezani, and J.~A. Salehi, ``Statistical studies
  of fading in underwater wireless optical channels in the presence of air
  bubble, temperature, and salinity random variations,'' \emph{IEEE
  Transactions on Communications}, pp. 1--1, 2018.

\bibitem{UWOCtcom}
E.~{Zedini}, H.~M. {Oubei}, A.~{Kammoun}, M.~{Hamdi}, B.~S. {Ooi}, and M.-S.
  {Alouini}, ``Unified statistical channel model for turbulence-induced fading
  in underwater wireless optical communication systems,'' \emph{IEEE
  Transactions on Communications}, vol.~67, no.~4, pp. 2893--2907, Apr. 2019.

\bibitem{Siedler}
G.~Siedler, W.~Gould, and J.~Church, \emph{Ocean Circulation and Climate:
  Observing and Modelling the Global Ocean}, ser. International Geophysics,
  77.\hskip 1em plus 0.5em minus 0.4em\relax San Diego: Academic Press, 2001.

\bibitem{JamaliRelay}
M.~V. Jamali, F.~Akhoundi, and J.~A. Salehi, ``Performance characterization of
  relay-assisted wireless optical {CDMA} networks in turbulent underwater
  channel,'' \emph{IEEE Transactions on Wireless Communications}, vol.~15,
  no.~6, pp. 4104--4116, June 2016.

\bibitem{Celiksensor}
N.~Saeed, A.~Celik, M.-S. Alouini, and T.~Y. Al-Naffouri, ``Performance
  analysis of connectivity and localization in multi-hop underwater optical
  wireless sensor networks,'' \emph{IEEE Transactions on Mobile Computing}, pp.
  1--1, Oct. 2018.

\bibitem{JamaliMultihop}
M.~V. Jamali, A.~Chizari, and J.~A. Salehi, ``Performance analysis of multi-hop
  underwater wireless optical communication systems,'' \emph{IEEE Photonics
  Technology Letters}, vol.~29, no.~5, pp. 462--465, Mar. 2017.

\bibitem{Celik2018}
A.~Celik, N.~Saeed, T.~Y. Al-Naffouri, and M.-S. Alouini, ``Modeling and
  performance analysis of multihop underwater optical wireless sensor
  networks,'' in \emph{2018 IEEE Wireless Communications and Networking
  Conference (WCNC)}, Apr. 2018, pp. 1--6.

\bibitem{Hasna}
M.~O. {Hasna} and M.-S. {Alouini}, ``A performance study of dual-hop
  transmissions with fixed gain relays,'' \emph{IEEE Transactions on Wireless
  Communications}, vol.~3, no.~6, pp. 1963--1968, Nov. 2004.

\bibitem{advancedfso}
S.~Arnon, J.~Barry, G.~Karagiannidis, R.~Schober, and M.~Uysal, \emph{Advanced
  Optical Wireless Communication Systems}, 1st~ed.\hskip 1em plus 0.5em minus
  0.4em\relax New York, NY, USA: Cambridge University Press, 2012.

\bibitem{Tableofintegrals}
I.~S. Gradshteyn and I.~M. Ryzhik, \emph{Table of Integrals, Series and
  Products}.\hskip 1em plus 0.5em minus 0.4em\relax New York: Academic Press,
  2000.

\bibitem{Yacoub2008}
D.~B. {Da Costa} and M.~D. {Yacoub}, ``Dual-hop transmissions with semi-blind
  relays over {N}akagami-m fading channels,'' \emph{Electronics Letters},
  vol.~44, no.~3, pp. 214--216, Jan. 2008.

\bibitem{Sonia2009}
D.~B. {Da Costa} and S.~{Aissa}, ``Cooperative dual-hop relaying systems with
  beamforming over {N}akagami-m fading channels,'' \emph{IEEE Transactions on
  Wireless Communications}, vol.~8, no.~8, pp. 3950--3954, Aug. 2009.

\bibitem{Peppas2010}
K.~Peppas, A.~Mansour, and G.~Tombras, ``Dual-hop transmissions with fixed-gain
  relays over {G}eneralized-{G}amma fading channels,'' \emph{Journal of
  Telecommunications}, vol.~1, no.~1, pp. 87--93, Feb. 2010.

\bibitem{Ines2010}
I.~{Trigui}, S.~{Affes}, and A.~{Stephenne}, ``On the performance of dual-hop
  fixed gain relaying systems over composite multipath/shadowing channels,'' in
  \emph{2010 IEEE 72nd Vehicular Technology Conference - Fall}, Sep. 2010, pp.
  1--5.

\bibitem{PrudinkovVol3}
A.~Prudnikov, Y.~Brychkov, and O.~Marichev, \emph{Integrals and Series, Volume
  3: More Special Functions}.\hskip 1em plus 0.5em minus 0.4em\relax CRC, 1999.

\bibitem{HTranforms}
A.~Kilbas and M.~Saigo, \emph{{H}-{T}ransforms : {T}heory and {A}pplications
  ({A}nalytical {M}ethod and {S}pecial {F}unction)}, 1st~ed.\hskip 1em plus
  0.5em minus 0.4em\relax CRC Press, 2004.

\bibitem{HFoxIntegrals}
P.~Mittal and K.~Gupta, ``\BIBforeignlanguage{English}{An integral involving
  generalized function of two variables},''
  \emph{\BIBforeignlanguage{English}{Proceedings of the Indian Academy of
  Sciences - Section A}}, vol.~75, no.~3, pp. 117--123, 1972.

\bibitem{arxivFox}
H.~{Chergui}, M.~{Benjillali}, and M.-S. {Alouini}, ``Rician {K}-factor-based
  analysis of {XLOS} service probability in {5G} outdoor ultra-dense
  networks,'' Aug. 2018, \url{{http://arxiv.org/abs/1804.08101}}.

\bibitem{MutivariateHcode1}
------, ``{GPU}-enabled multivariate fox {H}-function code,'' Aug. 2018,
  \url{https://doi.org/10.5281/zenodo.1400403}.

\bibitem{MutivariateHcode2}
------, ``Multivariate fox {H}-function: mfoxh,'' Apr. 2018,
  \url{{https://doi.org/10.5281/zenodo.1217925}}.

\bibitem{FerkanHFox}
F.~Yilmaz and M.-S. Alouini, ``Product of the powers of generalized
  {N}akagami-$m$ variates and performance of cascaded fading channels,'' in
  \emph{IEEE Global Telecommunications Conference (GLOBECOM'09)}, Nov. 2009,
  pp. 1--8.

\bibitem{AFN}
------, ``Novel asymptotic results on the high-order statistics of the channel
  capacity over generalized fading channels,'' in \emph{Proceedings of IEEE
  International Workshop on Signal Processing Advances in Wireless
  Communications (SPAWC'12)}, 2012, pp. 389--393.

\bibitem{dualhopFSO}
E.~Zedini, H.~Soury, and M.-S. Alouini, ``Dual-hop {FSO} transmission systems
  over {G}amma-{G}amma turbulence with pointing errors,'' \emph{IEEE
  Transactions on Wireless Communications}, vol.~16, no.~2, pp. 784--796, Feb.
  2017.

\bibitem{lapidoth}
A.~{Lapidoth}, S.~M. {Moser}, and M.~A. {Wigger}, ``On the capacity of
  free-space optical intensity channels,'' \emph{IEEE Transactions on
  Information Theory}, vol.~55, no.~10, pp. 4449--4461, Oct. 2009.

\bibitem{William94}
W.~H. Avery and C.~Wu, \emph{Renewable Energy from the Ocean : A Guide to
  {OTEC}}.\hskip 1em plus 0.5em minus 0.4em\relax Oxford University Press New
  York, 1994.

\bibitem{heterodyne1}
T.~Tsiftsis, ``Performance of heterodyne wireless optical communication systems
  over {G}amma-{G}amma atmospheric turbulence channels,'' \emph{Electronics
  Letters}, vol.~44, no.~5, pp. 372--373, Feb. 2008.

\bibitem{proakis2008digital}
J.~Proakis and M.~Salehi, \emph{Digital Communications}, ser. McGraw-Hill
  International Edition.\hskip 1em plus 0.5em minus 0.4em\relax McGraw-Hill,
  2008.

\bibitem{Wolfram}
\BIBentryALTinterwordspacing
Wolfram, \emph{The Wolfram {F}unctions {S}ite}, 2014. [Online]. Available:
  \url{http://functions.wolfram.com}
\BIBentrySTDinterwordspacing

\bibitem{HFunction}
A.~Mathai, R.~K. Saxena, and H.~J. Haubold, \emph{The H-Function: Theory and
  Applications}.\hskip 1em plus 0.5em minus 0.4em\relax Springer, 2010.

\end{thebibliography}
\end{document}